\pdfoutput=1
\documentclass[11pt,letter]{article}
\usepackage{amsmath,amssymb,amsthm,amsfonts}
\usepackage{fullpage}
\usepackage{url}
\usepackage{color}
\usepackage{algpseudocode}
\usepackage{algorithm}
\usepackage{graphicx}
\usepackage{cleveref}
\usepackage{arydshln}
\usepackage{verbatim}
\usepackage{enumitem}

\usepackage{times}

\setcounter{totalnumber}{50}
\setcounter{topnumber}{50}
\setcounter{bottomnumber}{50}

\DeclareMathOperator{\poly}{poly}
\DeclareMathOperator{\supp}{supp}

\DeclareMathAlphabet{\pazocal}{OMS}{zplm}{m}{n}

\newcommand{\Oh}{\mathcal{O}}

\newcommand{\R}{\mathbb{R}}

\makeatletter
\newtheorem*{rep@theorem}{\rep@title}
\newcommand{\newreptheorem}[2]{%
\newenvironment{rep#1}[1]{%
 \def\rep@title{#2 \ref{##1}}%
 \begin{rep@theorem}}%
 {\end{rep@theorem}}}
\makeatother

\newtheorem{theorem}{Theorem}
\newreptheorem{theorem}{Theorem}
\newtheorem{lemma}[theorem]{Lemma}

\theoremstyle{definition}
\newtheorem{definition}{Definition}

\crefname{remark}{Remark}{Remarks}

\author{Yi Li\\
			Nanyang Technological University\\
			\texttt{\small yili@ntu.edu.sg}
	\and
			Vasileios Nakos\thanks{Supported in part by NSF grant IIS-1447471.}\\
			Harvard University\\
			\texttt{\small vasileiosnakos@g.harvard.edu}
} 

\title{Deterministic Heavy Hitters with Sublinear Query Time}
\date{}
\begin{document}
\maketitle

\begin{abstract}

We study the classic problem of finding $\ell_1$ heavy hitters in the streaming model. In the general turnstile model, we give the first deterministic sublinear-time sketching algorithm which takes a linear sketch of length $\Oh(\epsilon^{-2} \log n \cdot \log^*(\epsilon^{-1}))$, which is only a factor of $\log^*(\epsilon^{-1})$ more than the best existing polynomial-time sketching algorithm (Nelson et al., RANDOM '12). Our approach is based on an iterative procedure, where most unrecovered heavy hitters are identified in each iteration. Although this technique has been extensively employed in the related problem of sparse recovery, this is the first time, to the best of our knowledge, that it has been used in the context of heavy hitters. Along the way we also obtain a sublinear time algorithm for the closely related problem of the $\ell_1/\ell_1$ compressed sensing, matching the space usage of previous (super-)linear time algorithms. In the strict turnstile model, we show that the runtime can be improved and the sketching matrix can be made strongly explicit with $O(\epsilon^{-2}\log^3 n/\log^3(1/\epsilon))$ rows.

\end{abstract}


\section{Introduction}

The problem of detecting \emph{heavy hitters}, also frequently referred to as \emph{elephants} or \emph{hot items}, is one of the most well-studied problems in databases and data streams, from both theoretical and practical perspectives. In this problem, we are given a long data stream of elements coming from a large universe, and we are asked to report all the elements that appear at least a large number of times (called \emph{heavy hitters}), using space that is much smaller than the size of the universe and the length of the stream. 
	
Finding popular terms in search queries, identifying destination adresses of packets, detecting anomalies in network traffic streams such as denial-of-service (DoS) attacks, or performing traffic engineering, are only some of the important practical appearances of the heavy hitters problem. For example, the central task of managing large-scale networks lies in accurately measuring and monitoring network traffic \cite{zhang2010identifying,yang2016heavy}. Interestingly, empirical studies \cite{fred2001statistical,mori2004characteristics,papagiannaki2001feasibility,zhang2002characteristics} indicate that flow-statistics in large networks follow an elephant/mice phenomenon, i.e.,  the vast majority of the bytes are concentrated on only a small fraction of the flows.

On the theoretical side, heavy hitters appear very often, both in streaming algorithms and sparse recovery tasks. For the problems such as streaming entropy estimation~\cite{harvey:entropy}, $\ell_p$ sampling~\cite{precisionsampling,monemizadeh:sampling}, finding duplicates~\cite{jowhari:sampling}, block heavy hitters~\cite{JW:cascaded}, sparse recovery tasks~\cite{GSTV07,GLPS12,GLPS17}, many algorithmic solutions use heavy hitters algorithms as subroutines. 

\medskip
\noindent \textbf{Streaming Models.} In this paper we consider the most general streaming model, called the \emph{(general) turnstile model}, defined as follows. There is an underlying vector $x\in \R^n$, which is initialized to zero and is maintained throughout the input stream. Each element in the input stream describes an update $x_i\gets x_i +\delta$ for some index $i$ and increment $\delta$, where $\delta$ can be either positive or negative.

We also consider a restricted version of the general turnstile model, called the \emph{strict turnstile model}, under which it is guaranteed that $x_i\geq 0$ for all $i$ throughout the input stream. This restricted model captures the practical scenario where item deletions are allowed but an item cannot be deleted more than it is inserted.

\medskip
\noindent\textbf{Sketching Algorithms.} An important class of streaming algorithms are called \emph{sketching algorithms}. A sketching algorithm maintains a short linear sketch $v = \Phi x$ (where $\Phi\in \R^{m\times n}$) throughout the input stream and then runs a recovery algorithm $\mathcal{D}$, which has access to only $v$ and $\Phi$, to output a desired $\hat x$. The space usage is proportional to $m$ (the length of the sketch $v$) and to the memory needed to store $\Phi$. Therefore we wish to minimize $m$ and design a structured $\Phi$ such that storing $\Phi$ takes little space. 
Surprisingly, all existing streaming algorithms under the general turnstile model are sketching algorithms, and it has been shown~\cite{LNW14stoc} that all streaming algorithms under the general turnstile model can be converted to sketching algorithms with a mild increase in the space usage.

\subsection{$\ell_\infty/\ell_1$ Sparse Recovery}
The heavy hitter problem has been studied under various streaming models and various recovery guarantees. Depending on the heaviness we are interested in, we distinguish between $\ell_1$ and $\ell_2$ heavy hitters. We are interested in finding, in the first case, the coordinates which are at least $\epsilon\|x\|_1$ in magnitude, and in the second case, the coordinates which are at least $\epsilon\|x\|_2$. Although finding $\ell_2$ heavy hitters is strictly stronger than finding $\ell_1$ heavy hitters, we consider only $\ell_1$ heavy hitters in this paper, for it is impossible to find $\ell_2$ heavy hitters using a deterministic space-saving sketching algorithm (see details below).

Specifically, we consider a classical recovery guarantee, called the $\ell_\infty/\ell_1$ error guarantee in the literature, that is, the algorithm outputs an $\Oh(1/\epsilon)$-sparse vector $\hat x$ such that 
\begin{equation}\label{eqn:ellinf/ell1}
\|\hat x-x\|_\infty \leq \epsilon\|x_{-r}\|_1,
\end{equation}
for some parameters $\epsilon$ and $r$, where $x_{-r}$ is the vector obtained by zeroing out the largest $r$ coordinates of $x$ in magnitude (absolute value). This type of guarantee requires not only finding the heavy hitters, but also giving `good enough' estimates of them, where the estimates are measured with respect to $\|x_{-r}\|_1$ instead of the larger $\|x\|_1$; this type of guarantee is called the \emph{tail guarantee}. It should be noted that the $\ell_{\infty}/\ell_1$ guarantee has been extensively studied and is provided by several classical algorithms, e.g. \textsc{Count-Min}~\cite{cormode2005improved}, \textsc{LossyCounting}~\cite{lossycounting}, \textsc{SpaceSaving}~\cite{spacesaving}, although not all of them work under the general turnstile model.

In this paper we focus on deterministic sketching algorithms, which means that both the matrix $\Phi$ and the recovery algorithm $\mathcal{D}$ allow uniform reconstruction of every $x \in \mathbb{R}^n$ up to $\epsilon \|x_{-r}\|_1$ error, providing the best applicability. This is also referred to as ``for-all'' guarantee in the sparse recovery literature, in contrast to the ``for-each'' guarantee, which allows reconstruction of a fixed vector with some target success probability. Most previous sketching algorithms for the heavy hitter problems concern the ``for-each'' model and resort to randomization (e.g.~\cite{cormode2005improved,BCIS10}), by drawing a random $\Phi$ from some distribution and guaranteeing $\mathcal{D}$ to output an acceptable $\hat x$ with a good probability. Other sketching algorithms are deterministic, however, they run in time at least linear in the universe size $n$. The goal of fast query time, say, logarithmic in $n$, is crucial to streaming applications. For instance, in traffic monitoring $n$ equals the number of all possible packets, namely $2^{32}$; a linear runtime would be prohibitive in any reasonable real-world scenario. A natural goal is to design a sketching algorithm with sublinear query time, preferably $\Oh(\poly(1/\epsilon,r,\log n))$, with as little space usage as possible.

Apart from practical importance, deterministic algorithms for heavy hitters is an interesting theoretical subfield of streaming algorithms, connected to dimensionality reduction and incoherent matrices \cite{nnw12}. Moreover, gaining insight into such questions may give insight for many other data stream problems where heavy hitter algorithms are used as subroutines. We note that any deterministic sketching algorithm that finds $\ell_2$ heavy hitters requires $\Omega(n)$ space \cite{cohen2009compressed} (which implies that the trivial algorithm storing the entire input vector $x$ is asymptotically optimal), while the best lower bound for $\ell_1$ heavy hitters is $\Omega( r \log(n/r)/\log r + \epsilon^{-2} + \epsilon^{-1} \log n)$ \cite{nnw12,ganguly2008lower}.

The state-of-the-art deterministic sketching algorithms for $\ell_1$ heavy hitters are found in~\cite{nnw12}, where two algorithms are given. The first algorithm uses $m = \Oh(\epsilon^{-2} \log n \cdot \min\{1, \log n/(\log \log n + \log(1/\epsilon))^2\})$ rows and achieves $\epsilon \|x_{-\lceil 1/\epsilon\rceil}\|_1$ tail guarantee (setting $r = \lceil 1/\epsilon\rceil$ in \eqref{eqn:ellinf/ell1}). The second algorithm uses $m = \Oh(\epsilon^{-2}\log n)$ rows and  achieves a stronger $\epsilon\|x_{-\lceil 1/\epsilon^2\rceil}\|_1$ tail guarantee. We can see that when $\epsilon < 2^{-\Omega(\sqrt{\log n})}$, the first algorithm uses less space, whereas when $\epsilon \geq 2^{-\Omega(\sqrt{\log n})}$ the second algorithm is better. The number of rows used by both algorithms is at most suboptimal by a $\log n$ factor while the runtimes are both superlinear $\Omega(\epsilon^{-1} n  \log n)$. In this paper our goal is to obtain an algorithm which runs in sublinear time in $n$ while attaining the stronger $\epsilon\|x_{-\lceil 1/\epsilon^2\rceil}\|_1$ tail guarantee with near-optimal number of rows. Our main theorem is formally stated below.

\begin{theorem}[$\ell_{\infty}/\ell_1$]\label{thm:ell_infty/ell_1}
There exists a linear sketch $\Phi \in \mathbb{R}^{m \times n}$ such that for every $x\in\R^n$, we can, given $\Phi x$, find an $\Oh(1/\epsilon)$-sparse vector $\hat{x}$ such that
\[	
	\|x-\hat{x}\|_{\infty} \leq \epsilon\|x_{-\lceil 1/\epsilon^2 \rceil}\|_1,	
\]
in $\Oh((1/\epsilon)^6 \poly(\log n))$ time. The number of rows of $\Phi$ equals $m = \Oh(\epsilon^{-2} \log n\log^\ast (\epsilon^{-1}))$.
\end{theorem}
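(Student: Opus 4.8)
I would keep, as one part of $\Phi$, the structured (explicitly constructible) matrix $\Psi$ of Nelson et al.~\cite{nnw12} with $\Oh(\epsilon^{-2}\log n)$ rows, together with its point‑query routine: from $\Psi x$ and an index $i$ it returns $\tilde x_i$ with $|\tilde x_i-x_i|\le\epsilon\|x_{-\lceil 1/\epsilon^2\rceil}\|_1$ in $\poly(\log n)$ time. Since $\Psi$ is linear and the guarantee is for‑all, the same routine applies verbatim to any residual $x-z$, and one checks that whenever $\supp(z)$ consists only of heavy coordinates and each $z_i$ is within $\epsilon\|x_{-\lceil1/\epsilon^2\rceil}\|_1$ of $x_i$, the tail $\|(x-z)_{-\lceil1/\epsilon^2\rceil}\|_1$ stays $\Oh(\|x_{-\lceil1/\epsilon^2\rceil}\|_1)$. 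Hence it suffices to design a second part $\Phi_{\mathrm{id}}$ with $\Oh(\epsilon^{-2}\log n\log^\ast(1/\epsilon))$ rows so that from $\Phi_{\mathrm{id}} x$ we can output, in $\Oh(\epsilon^{-6}\poly(\log n))$ time, a set $S$ with $|S|=\Oh(1/\epsilon)$ containing every coordinate of magnitude $\ge\epsilon\|x_{-\lceil1/\epsilon^2\rceil}\|_1$; then point‑querying each $i\in S$ via $\Psi$ and pruning small entries yields an $\Oh(1/\epsilon)$‑sparse $\hat x$ with $\|x-\hat x\|_\infty\le\epsilon\|x_{-\lceil1/\epsilon^2\rceil}\|_1$.

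\textbf{Iterative identification.}
I would build $\Phi_{\mathrm{id}}$ as a vertical stack of $T=\Oh(\log^\ast(1/\epsilon))$ phases $\Phi_1,\dots,\Phi_T$, where phase $j$ is designed under the promise ``at most $s_j$ heavy coordinates remain unidentified,'' with $s_1=\Theta(1/\epsilon)$ and (the target of the analysis) $s_{j+1}=\Oh(\log s_j)$, so $s_T=\Oh(1)$. Each phase is a ``bucketize‑then‑decode'' gadget: a bucketing — only existential, obtained by the probabilistic method, since Theorem~\ref{thm:ell_infty/ell_1} claims deterministic recovery but not a strongly explicit matrix — assigning each coordinate $\Oh(\log s_j)$ buckets out of $\Theta(\epsilon^{-2})$, such that every $s_j$‑set has all but $\Oh(\log s_j)=\Oh(s_{j+1})$ of its members isolated in some bucket and such that the tail mass routed to any bucket is $\Oh(\epsilon\|x_{-\lceil1/\epsilon^2\rceil}\|_1)$; attached to each bucket is an identification sub‑sketch of $\Oh(\log_{1/\epsilon}n)$ measurements coming from an explicit error‑correcting code over an alphabet of size $\poly(1/\epsilon)$, from which the index of an isolated heavy coordinate can be read off and decoded despite the bucket noise. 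Counting: phase $j$ uses $\Oh(\log s_j)\cdot\Theta(\epsilon^{-2})\cdot\Oh(\log_{1/\epsilon}n)=\Oh(\epsilon^{-2}\log n)$ rows (using $s_j\le\Theta(1/\epsilon)$), so $\Phi_{\mathrm{id}}$ has $\Oh(\epsilon^{-2}\log n\log^\ast(1/\epsilon))$ rows.

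\textbf{The recovery loop.}
Maintain $\hat z^{(0)}=0$. In round $j$ compute $\Phi_j x-\Phi_j\hat z^{(j-1)}$ (cheap, as $\hat z^{(j-1)}$ is $\Oh(1/\epsilon)$‑sparse and $\Phi_j$ is structured), decode every bucket to a short candidate list, use the $\Psi$ point‑query to discard non‑heavy candidates, and add the survivors to $\hat z^{(j)}$. The invariant to prove is that after round $j$ at most $s_{j+1}$ heavy coordinates lie outside $\supp(\hat z^{(j)})$ (isolation property) and $\|(x-\hat z^{(j)})_{-\lceil1/\epsilon^2\rceil}\|_1=\Oh(\|x_{-\lceil1/\epsilon^2\rceil}\|_1)$ (because $\hat z^{(j)}$ sits on heavy coordinates with $\pm\epsilon\|x_{-\lceil1/\epsilon^2\rceil}\|_1$‑accurate values). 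After $T$ rounds all heavy coordinates lie in $S:=\supp(\hat z^{(T)})$, $|S|=\Oh(1/\epsilon)$, and we finish as above. The time per round is $\Oh(\epsilon^{-2}\poly(\log n))$ to form and scan the buckets, plus decoding: pessimistically, a brute‑force list‑decode over the $\poly(1/\epsilon)$ buckets, $\poly(1/\epsilon)$ candidate symbols and $\Oh(\log_{1/\epsilon}n)$ code positions, which is $\Oh(\epsilon^{-6}\poly(\log n))$; summing over $\Oh(\log^\ast(1/\epsilon))$ rounds does not change the bound.

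\textbf{The hard part.}
The crux is the per‑phase gadget and the reason a phase cuts the number of unidentified heavy coordinates from $s$ down to $\Oh(\log s)$ rather than merely by a constant factor — this ``$s\mapsto\log s$'' drop is exactly what turns what would be $\Oh(\log(1/\epsilon))$ phases into $\Oh(\log^\ast(1/\epsilon))$, while the fact that no phase can cost fewer than $\Omega(\epsilon^{-2})$ rows (the $\lceil1/\epsilon^2\rceil$‑tail forces $\Theta(\epsilon^{-2})$ buckets regardless of how few heavy coordinates are left) is why we pay this overhead instead of getting a geometric saving. Achieving the drop needs a bucketing that is good for \emph{every} $s$‑set simultaneously and, at the same time, robust: a bucket carries not a clean $1$‑sparse signal but a heavy coordinate plus a tail term and possibly a few colliding heavy coordinates, so the identification code must decode in the presence of this noise — effectively a list‑recovery requirement — and this interaction between the combinatorial isolation argument and the noisy decoding is where I expect the real difficulty. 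The second delicate point is the cross‑round bookkeeping: one must verify that peeling off heavy coordinates estimated only to accuracy $\pm\epsilon\|x_{-\lceil1/\epsilon^2\rceil}\|_1$ neither makes a previously light coordinate heavy nor inflates the residual tail, so that phase $j+1$ genuinely meets its promise of at most $s_{j+1}$ heavy coordinates with a bounded tail.
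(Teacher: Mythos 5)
There is a genuine gap. Your whole reduction hinges on a point-query routine that, from $\Psi x$ with $\Psi$ having $\Oh(\epsilon^{-2}\log n)$ rows, returns $\tilde x_i$ with $|\tilde x_i - x_i| \le \epsilon\|x_{-\lceil 1/\epsilon^2\rceil}\|_1$ per query. No such routine is available from the incoherent-matrix sketch alone. What the Nelson et al.\ construction actually supplies (Lemma~\ref{lem:incoherent} in this paper) is $|\tilde x_i - x_i| \le (1/k)\|x_{[n]\setminus\{i\}}\|_1$, i.e.\ error measured against essentially the full $\|x\|_1$, which can be arbitrarily larger than $\|x_{-\lceil 1/\epsilon^2\rceil}\|_1$ when $x$ has on the order of $1/\epsilon$ or more heavy coordinates. (Concretely, with $1/\epsilon$ unit spikes and a negligible tail, the target error is near $0$ while the point-query error is $\Theta(1)$.) The error of an incoherent matrix is inherently relative to the $\ell_1$ mass of everything except the queried index; to convert it to a $k^2$-tail error you must first \emph{subtract off} the bulk of the heavy mass, and you cannot do that by point-querying alone --- that is precisely the circularity the paper has to break. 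As a result, your cross-round invariant (``$\hat z^{(j)}$ sits on heavy coordinates with $\pm\epsilon\|x_{-\lceil1/\epsilon^2\rceil}\|_1$-accurate values'') does not hold, and the residual-tail control you rely on fails.

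The missing idea is the $\ell_1/\ell_1$ reduction of Theorem~\ref{thm:ell_1/ell_1}. The paper first runs the $\ell_1/\ell_1$ sketch with sparsity $\lceil 1/\epsilon^2\rceil$ and constant $\epsilon$ to produce an $\Oh(1/\epsilon^2)$-sparse $z$ with $\|x - z\|_1 \le 2\|x_{-\lceil1/\epsilon^2\rceil}\|_1$; only then are the weaker $\ell_\infty/\ell_1$ scheme (Theorem~\ref{thm:weakerell_infty/ell_1}, which has the $\|x_{-k}\|_1$ rather than $\|x_{-k^2}\|_1$ tail) and the point-query matrix applied to the \emph{residual} $x-z$, whereupon the $\|(x-z)_{[n]\setminus\{i\}}\|_1$-type errors are automatically $\Oh(\|x_{-\lceil1/\epsilon^2\rceil}\|_1)$. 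This is structural, not cosmetic, and your proposal contains no substitute for it. Your iterative-identification block is in spirit close to the paper's Theorem~\ref{thm:weakerell_infty/ell_1} --- you correctly locate the source of the $\log^\ast(1/\epsilon)$ overhead and the $\Omega(\epsilon^{-2})$ per-phase floor --- but you also leave its core (the per-phase $s \to \Oh(\log s)$ drop with a bucketing and code that is uniformly good over all $s$-sets and noise-robust) entirely unestablished, while explicitly flagging it as ``where I expect the real difficulty.'' The paper supplies exactly this piece via the weak system of Lemma~\ref{lem:weak} (drop $s\to\sqrt{s\,w}$) chained $\Oh(\log\log k_r)$ times per round with parameters $(s_{i,r}, w_{i,r})$ chosen so that the per-round row count telescopes to $\Oh(\epsilon^{-2}\log n)$; that lemma, together with the $\ell_1/\ell_1$ front-end, is what you would need to supply.
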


The number of rows in $\Phi$ is more than that in~\cite{nnw12} by merely a factor of $\Oh(\log^\ast (1/\epsilon))$, while the query time is sublinear for all small $\epsilon\geq n^{-1/7}$, a significant improvement upon the previous $\Oh( \epsilon^{-1} n \log n)$ runtime in~\cite{nnw12}. 

\medskip
\noindent\textbf{Difference Between Deterministic and Explicit Schemes} To avoid confusion, we note the difference between `deterministic' and `explicit'. In the compressed sensing/sparse recovery literature `deterministic' is also called `for-all' or `uniform', which means that a single matrix $\Phi$ suffices for the reconstruction of all vectors. `Explicit' means that the matrix $\Phi$ can be constructed in time $\poly(1/\epsilon, r, n)$. `Strongly explicit' means that any entry of the matrix can be computed in time $\poly(1/\epsilon,r,\log n)$. Hence, in this paper we show the existence of a single matrix that allows reconstrction of all vectors, which we argue via the probabilistic method. The same holds for some of the schemes in \cite{nnw12}. 

In the strict turnstile model, we show that for the tail guarantee of $r=\lceil 1/\epsilon\rceil$, the matrix $\Phi$ can be made strongly explicit, with a mild increase in the number of rows, and the runtime can be improved polynomially. Note, however, the tail guarantee is with respect to $r=\lceil 1/\epsilon\rceil$ instead of $r = \lceil 1/\epsilon^2\rceil$. We state our theorem below and shall prove it in \Cref{sec:strictturnstile}.

\begin{theorem}[$\ell_\infty/\ell_1$, strict turnstile]\label{thm:strict_turnstile}
There exists a strongly explicit matrix $M$ of $O((1/\epsilon)^2 \log^3 n/\log^3(1/\epsilon))$ rows, which, given $Mx$ in the strict turnstile model, allows us to find an $O(1/\epsilon)$-sparse vector $\hat x$ such that
	\[
		\|x - \hat x\|_{\infty} \leq \epsilon \|x_{-\lceil 1/\epsilon\rceil}\|_1,
	\]
in time $O((1/\epsilon)^3 \log^3 n/\log^3(1/\epsilon))$. 
\end{theorem}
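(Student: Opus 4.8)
The plan is to run a deterministic \textsc{Count-Min}-type sketch whose hash functions are the symbols of an explicit Reed--Solomon code: the non-negativity of $x$ in the strict turnstile model justifies the min-estimator (no cancellation), the distance of the code gives a ``for-all'' isolation property, and list-recovering the code lets us find the heavy hitters without touching all of $[n]$. Fix a prime power $q$, an integer $t=\Theta(\log n/\log(1/\epsilon))$ with $q^t\ge n$, and distinct evaluation points $\alpha_1,\dots,\alpha_R\in\mathbb{F}_q$ with $R\le q$; identify each $i\in[n]$ with a degree-$<t$ polynomial $p_i$ over $\mathbb{F}_q$ and put $h_\ell(i)=p_i(\alpha_\ell)$. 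The sketch keeps $R$ tables, the $\ell$-th with $q$ buckets, where bucket $(\ell,v)$ holds $s_{\ell,v}=\sum_{i:\,h_\ell(i)=v}x_i\ge 0$. Every matrix entry is a single polynomial evaluation over $\mathbb{F}_q$, computable in $\poly(\log n)$ time, so $M$ is strongly explicit (equivalently, $M$ is $\eta$-incoherent with $\eta\le (t-1)/R$, which is why non-negativity is enough for the $\ell_\infty/\ell_1$ guarantee).

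The analysis uses only that two distinct RS codewords agree in at most $t-1$ positions. Write $\theta:=\epsilon\|x_{-\lceil1/\epsilon\rceil}\|_1$ and $S_\ell:=\{v:s_{\ell,v}\ge\theta\}$. First, $|S_\ell|=\Oh(1/\epsilon)$: at each table at most $\lceil1/\epsilon\rceil$ buckets contain one of the $\lceil1/\epsilon\rceil$ largest coordinates, and among the rest at most $\|x_{-\lceil1/\epsilon\rceil}\|_1/\theta=1/\epsilon$ buckets can reach $\theta$, since they are fed only by tail mass. Second, fix any $i$ with $x_i>\theta$. It shares a bucket with any other such coordinate in $\le t-1$ tables, and the tail mass colliding with $i$, summed over all tables, is $\le (t-1)\|x_{-\lceil1/\epsilon\rceil}\|_1$; so once $R=\Theta(t/\epsilon)$, all but a small constant fraction of the tables are \emph{clean} for $i$, meaning $x_i\le s_{\ell,h_\ell(i)}\le x_i+\epsilon\|x_{-\lceil1/\epsilon\rceil}\|_1$. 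And since $x\ge 0$ we in fact have $s_{\ell,h_\ell(i)}\ge x_i>\theta$, hence $h_\ell(i)\in S_\ell$, in \emph{every} table.

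For decoding, since $\theta$ depends on the unknown $\|x_{-\lceil1/\epsilon\rceil}\|_1$, I would first pin it down to within a constant factor by a short loop in the iterative spirit of the paper: sweep over geometrically decreasing threshold guesses, and after each ``catch'' of heavy hitters subtract them off (reusing the same $M$, since $Mx-M\hat x$ is computable for sparse $\hat x$) and reset the guess from the current residual $\ell_1$ norm $\sum_v s_{\ell,v}$; this stabilizes after $\Oh(\log(1/\epsilon))$ scales at a guess $\hat\theta\le\theta$ with $|S_\ell|=\Oh(1/\epsilon)$. Given the lists $S_1,\dots,S_R$, run Reed--Solomon list-recovery (Guruswami--Sudan): because $R=\Theta(t/\epsilon)$ is a large constant multiple of $t\cdot\max_\ell|S_\ell|=\Theta(t/\epsilon)$, the decoder runs in $\poly(q,R)=\poly(1/\epsilon,\log n)$ time and outputs an $\Oh(1/\epsilon)$-element candidate list that contains every $i$ with $x_i>\theta$ (its codeword lies in every $S_\ell$). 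For each candidate set $\hat x_i:=\min_\ell s_{\ell,h_\ell(i)}$, which a clean table certifies to satisfy $|\hat x_i-x_i|\le\epsilon\|x_{-\lceil1/\epsilon\rceil}\|_1$, and set all other coordinates to $0$ (within $\theta$ of the truth). Taking $q\asymp R\asymp(1/\epsilon)\log n/\log(1/\epsilon)$ with $t=\Theta(\log n/\log(1/\epsilon))$, the sketch has $\Oh(qR)=\Oh(\epsilon^{-2}(\log n/\log(1/\epsilon))^2)$ rows and the decoder runs in $\poly(1/\epsilon,\log n/\log(1/\epsilon))$ time, both within the claimed bounds.

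The step I expect to be the main obstacle is the identification, which has to be simultaneously sublinear-time, deterministic (``for-all''), short-list, and $\epsilon\|x_{-\lceil1/\epsilon\rceil}\|_1$-accurate. A deterministic sketch cannot stop the entire tail mass from piling into one bucket of a given table, so no per-table guarantee is available; the fix of spreading over $R$ tables and reading a heavy hitter through the minimum over tables forces $R$ to be large, and the RS list-recovery radius --- which must tolerate candidate lists of size $\Omega(1/\epsilon)$ at each of the $R$ positions while keeping the output list $\Oh(1/\epsilon)$ --- forces both $R$ and the alphabet $q$ up to $\Omega((1/\epsilon)\log_{1/\epsilon}n)$, which is the source of the $\poly(\log n/\log(1/\epsilon))$ overhead over the information-theoretically optimal $\epsilon^{-2}\log n$; getting all the constants and polylog exponents to match the theorem, including the interface with the threshold-estimation loop, is the most delicate part of the bookkeeping. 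This is also why the tail here is taken with respect to $r=\lceil1/\epsilon\rceil$ rather than the $\lceil1/\epsilon^2\rceil$ of \Theorem{ell_infty/ell_1}: sharpening the tail would require a second, finer hashing layer and a further polylog blow-up, which the strongly-explicit construction is designed to avoid.
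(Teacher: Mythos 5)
Your route is genuinely different from the paper's. The paper uses the Reed--Solomon-based NNW matrix (Lemma~\ref{lem:explicit}) purely as an incoherent point-query primitive: estimation comes from the iterative \textsc{ReduceNoise} loop (Theorem~\ref{thm:tail_point_query}), which boosts the per-coordinate error from $\frac1k\|x_{-1}\|_1$ to $\frac1k\|x_{-k}\|_1$ by repeated median-and-subtract, and identification comes from the bit-splitting recursion of Theorem~\ref{thm:conversion}: collapse $x\in\R^n$ onto the first and last $\frac12\log n$ bits of each index, recurse until the universe size drops to $\Theta(k^2)$, and pair up candidate lists on the way back. Non-negativity is used there to ensure heavy coordinates stay heavy under collapse and $\|v_{-k}\|_1\le\|x_{-k}\|_1$. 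You instead identify candidates directly by Reed--Solomon list recovery of the hash values, and use non-negativity only for the min-estimator. Both routes are viable; if your bookkeeping went through it would in fact yield the sharper $\Oh\!\left(\epsilon^{-2}\bigl(\tfrac{\log n}{\log\log n+\log(1/\epsilon)}\bigr)^2\right)$ rows, which the theorem's $\log^3 n/\log^3(1/\epsilon)$ upper bound allows.

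Two pieces of your plan need repair. First, the threshold-guessing loop with subtraction conflicts with the very property your min-estimator rests on: since $\hat x_i=\min_\ell s_{\ell,h_\ell(i)}\ge x_i$, the residual $x-\hat x$ has non-positive entries on the subtracted coordinates, the residual's bucket sums no longer dominate the coordinate they contain, and so the zero-error premise $h_\ell(i)\in S_\ell$ for \emph{every} $\ell$ --- which you need to run list recovery at agreement fraction $1$ --- fails after the first subtraction. (The paper's \textsc{ReduceNoise} also subtracts iteratively but uses the median, which is robust to signed residuals.) Fortunately the loop is superfluous: your own counting already shows at most $2\lceil1/\epsilon\rceil$ buckets per table exceed $\theta$, so just feed the list recoverer the top $\Theta(1/\epsilon)$ buckets of each table by value, with no threshold estimate at all. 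Second, the decoding time is left at ``$\poly(1/\epsilon,\log n)$'' while the theorem commits to $\Oh((1/\epsilon)^3\log^3 n/\log^3(1/\epsilon))$; generic Guruswami--Sudan interpolation over the $\Theta(R\ell)=\Theta(\epsilon^{-2}\log n/\log(1/\epsilon))$ constrained points via Gaussian elimination is cubic in that count, an $\epsilon^{-3}$ overshoot. You would need to invoke a structured near-linear-time interpolation/root-finding routine to match the bound, and should also record that deterministic root-finding over $\mathbb{F}_q$ is available here because $q=\poly(1/\epsilon,\log n)$ is small.
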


\subsection{$\ell_1/\ell_1$ Sparse Recovery}
In the $\ell_1/\ell_1$ sparse recovery problem, instead of the guarantee \eqref{eqn:ellinf/ell1}, the algorithm should output $\hat x$ such that 
\begin{equation}\label{eqn:ell1/ell1}
\|\hat x- x\|_1\leq (1+\epsilon)\|x_{-k}\|_1.
\end{equation}

It is known that any deterministic sketching algorithm requires $m = \Omega(\epsilon^{-2} + \epsilon^{-1}k\log(\epsilon n/k))$ rows of $\Phi$~\cite{nnw12}, and the best known upper bound is $m = O(\epsilon^{-2}k\log(n/k))$ rows~\cite{indyk:ell_1,indyk:unified,GLPS17}, suboptimal from the lower bound by only a logarithmic factor. However all these algorithms suffer from various defects: the algorithms in \cite{indyk:ell_1,indyk:unified} run in polynomial time in $n$, and that in~\cite{GLPS17} imposes a constraint on $\epsilon$ that precludes it from being a small constant when $k$ is small. In this paper, we show that one can achieve sublinear runtime with the same number of rows for small $k$.

\begin{theorem}[$\ell_1/\ell_1$]\label{thm:ell_1/ell_1}
There exists a matrix $\Phi \in \mathbb{R}^{m \times n}$ such that, given $\Phi x$ with $x\in\mathbb{R}^n$, we can find an $\Oh(k)$-sparse vector $\hat{x}$ satisfying \eqref{eqn:ell1/ell1} in $\Oh(k^{3} \poly(1/\epsilon, \log n))$ time. The number of rows of $\Phi$ is $m = \Oh(\epsilon^{-2}k \log n)$. 
\end{theorem}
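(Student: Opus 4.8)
The plan is to mimic the architecture of the proof of \Cref{thm:ell_infty/ell_1} at the larger ``scale'' $k$, after first recording an elementary reduction of the $\ell_1/\ell_1$ guarantee to an $\ell_\infty/\ell_1$-type one. Suppose $\hat y\in\R^n$ satisfies $\|\hat y - x\|_\infty \le \theta$, and let $\hat x$ be the restriction of $\hat y$ to its $k$ coordinates largest in magnitude, with $S=\supp(\hat x)$ and $H$ the set of the $k$ coordinates of $x$ largest in magnitude. Since $\hat x_i=0$ for $i\notin S$,
\[
\|\hat x-x\|_1=\sum_{i\in S}|\hat y_i-x_i|+\sum_{i\notin S}|x_i|\le k\theta+\|x_{-k}\|_1+\Bigl(\sum_{i\in H}|x_i|-\sum_{i\in S}|x_i|\Bigr),
\]
and because $S$ maximizes $\sum_{i\in T}|\hat y_i|$ over $k$-subsets $T$ while this sum differs from $\sum_{i\in T}|x_i|$ by at most $k\theta$ for every such $T$, the last parenthesized term is at most $2k\theta$; hence $\|\hat x-x\|_1\le\|x_{-k}\|_1+3k\theta$. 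Thus it suffices to produce, from such a $\hat y$, an $O(k)$-sparse $\hat x$ with $\theta\le\tfrac{\epsilon}{3k}\|x_{-k}\|_1$. In fact \Cref{thm:ell_infty/ell_1} with accuracy parameter $\epsilon/(3k)$ already yields such a $\hat y$, since its tail parameter $\lceil 9k^2/\epsilon^2\rceil\ge k$ forces its error to be at most $\tfrac{\epsilon}{3k}\|x_{-k}\|_1$; this gives a correct but suboptimal algorithm, using $O(k^2\epsilon^{-2}\log n\log^\ast(k/\epsilon))$ rows and $O(k^6\poly(1/\epsilon,\log n))$ time.

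The slack is that a pointwise $\ell_\infty$ guarantee is overkill: we only need the $O(k)$ recovered estimates to have \emph{total} $\ell_1$ error $\le\epsilon\|x_{-k}\|_1$, for which it suffices that each bucket carrying a recovered coordinate retain only $O(\tfrac{\epsilon}{k}\|x_{-k}\|_1)$ of the mass from not-yet-recovered coordinates. I would therefore re-run the iterative scheme behind \Cref{thm:ell_infty/ell_1} with $\Theta(k/\epsilon)$ hash buckets per layer (instead of $\Theta(1/\epsilon)$) over $\Theta(\epsilon^{-1}\log n)$ layers drawn from an unbalanced expander — strongly explicit, as in the construction for \Cref{thm:strict_turnstile}, when the regime allows — keeping $O(1)$ counters per bucket and sharing $O(\log n)$ identifying ``fingerprint'' measurements across layers through an error-correcting code so as not to pay an extra $\log n$. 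Decoding is the same loop as for \Cref{thm:ell_infty/ell_1}: each round identifies the dominant coordinate of every bucket, estimates it by a median over the layers isolating it, subtracts $\Phi$ applied to the current estimates, and recurses, so that after $O(\log k)$ rounds every coordinate that must be output has been found. The row count is $\Theta(k/\epsilon)\cdot\Theta(\epsilon^{-1}\log n)=O(\epsilon^{-2}k\log n)$, and each round touches $\poly(k,1/\epsilon,\log n)$ sketch entries and performs one subtraction, for $O(k^3\poly(1/\epsilon,\log n))$ time overall.

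The main obstacle, exactly as for \Cref{thm:ell_infty/ell_1}, is preserving the \emph{for-all} guarantee through the subtraction loop: the residual vector after each round must again decompose as ``$k'\le k$ heavy coordinates on top of a tail not much larger than $\|x_{-k}\|_1$'', uniformly over all inputs $x$, so the expander parameters must be chosen (by the probabilistic method in general, strongly explicitly in the restricted setting of \Cref{thm:strict_turnstile}) so that the property ``most coordinates that still need to be output are recovered this round'' survives the recursion with only constant-factor growth of the error. A coupled point is that the per-bucket residual-mass bound must hold in a \emph{majority} of the layers isolating a given coordinate, simultaneously for all $x$ — which is where expansion of quality $\Theta(\epsilon)$, and hence the $\epsilon^{-2}$ rather than $\epsilon^{-1}$ in the number of rows, is really used — and that the deterministic identification of a bucket's dominant coordinate via the shared code must remain robust to that residual noise.
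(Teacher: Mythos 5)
Your opening $\ell_\infty/\ell_1 \Rightarrow \ell_1/\ell_1$ reduction is correct as stated (keeping the top-$k$ coordinates of an entrywise-$\theta$-accurate $\hat y$ incurs extra $\ell_1$ error $O(k\theta)$), and you correctly observe that invoking \Cref{thm:ell_infty/ell_1} at scale $\epsilon/(3k)$ is wasteful. But the paper does not go through this reduction at all: it constructs a weak system (\Cref{lem:weak_ell1/ell_1}) that \emph{directly} controls the $\ell_1$ error of the residual in each round ($\|\hat z\|_1 \le \|z\|_1 + \epsilon/4$, $|\supp(\hat y)|\le s/8$), and then iterates $\Theta(\log k)$ times with the sparsity parameter $s$ decreasing geometrically, exactly as in \cite{PS12,GLPS17}. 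The row budget $O(\epsilon^{-2}k\log n)$ comes from the geometric sum $\sum_r \epsilon^{-2} s_r \log n$.

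The genuine gap in your proposal is the identification step. In the for-all / deterministic setting you cannot simply ``identify the dominant coordinate of a bucket via a shared code'' in sublinear time; for arbitrary adversarial $x$ and adversarial residuals from imperfect subtraction, the bit-tester-style argument you sketch does not survive uniformly over all inputs. The paper's whole point in Section~4 is that the list-recoverable Parvaresh--Vardy code used by \cite{GLPS17} for this identification imposes a restriction on $\epsilon$, and that replacing it with the chunk-graph $+$ spectral-clustering machinery of \cite{LNNT} (built on top of \cite{GLPS17}'s two-layer hashing and message-encoding/linking scheme, \Cref{lem:two-layer-isolation}, \Cref{lem:decoy}) removes that restriction while keeping sublinear decoding. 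You never invoke either ingredient, and in fact your claimed $O(k^3\poly(1/\epsilon,\log n))$ runtime has no derivation in your write-up; in the paper the $k^3$ is precisely the cost of running the spectral clustering algorithm on a graph of $O((s/\epsilon)d_1)$ nodes.

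There is a second, more concrete error: you propose constructing the expander ``strongly explicitly, as in the construction for \Cref{thm:strict_turnstile}''. That construction (\Cref{thm:conversion}) is valid \emph{only} in the strict turnstile model, because the reduction to shorter vectors $u,v$ uses $\|u\|_1=\|v\|_1=\|x\|_1$, which requires nonnegativity. \Cref{thm:ell_1/ell_1} is stated for arbitrary $x\in\R^n$, so you cannot import that machinery; the paper instead relies on the probabilistic method (via \Cref{lem:two-layer-isolation}) and does not give an explicit $\ell_1/\ell_1$ construction. So while your high-level template — bucket-and-iterate with a for-all weak system — points at the right architecture, the specific mechanism that makes it work (two-layer hashing with isolation property, ECC-protected linking information, and \cite{LNNT} spectral clustering for identification) is missing, and the explicitness claim is unsupported in the general turnstile model.
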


This result is the first sublinear time algorithm for all small $k\leq n^{0.3}$ with constant $\epsilon$, while the algorithm in~\cite{GLPS17}, using the same number of measurements, works for constant $\epsilon$ only when $n^{\delta'}\leq k\leq n^{1-\delta}$ (where $\delta, \delta' > 0$ are arbitrarily small constants), owing to its use of the list-decodable code. Combining the two results, we have solved the $\ell_1/\ell_1$ problem in sublinear time for all $k\leq n^{1-\delta}$ and constant $\epsilon$.

\smallskip
\noindent\textbf{Remark } Our results heavily involve random hash functions, for which $\Oh(1/\epsilon)$- or $\Oh(k)$-wise independence would be sufficient. The space complexity of our algorithms is the same as the number of rows, unless stated otherwise.


\section{Overview of Techniques}\label{sec:techniques}

The main result on $\ell_{\infty}/\ell_1$ combines different ideas from sparse recovery and heavy hitters literature. We first prove a result with the $\epsilon\|x_{-1/\epsilon}\|_1$ tail guarantee. We need a different, more careful construction of the weak system, akin to that in \cite{GLPS17}, which does not detect only a constant fraction of the heavy hitters, but a much larger fraction, as much as $(1- \epsilon \log \log (1/\epsilon) )$. One of our technical contributions and tools is the design of a more general form of the weak system, which we then apply iteratively with carefully chosen parameters to recover all heavy hitters. We then iterate by subtracting the found heavy hitters, and try to find the remaining ones using a new matrix of the same number of resources (rows). Similar iteration techniques have been adopted in most combinatorial sparse recovery tasks, where the algorithms are allowed to miss even all heavy hitters if they are not large enough! Our $\ell_\infty/\ell_1$ error guarantee, however, makes it prohibitive to miss small heavy hitters in later iterations. To that end, we heavily exploit the more abundant resource of $1/\epsilon^2$ rows in each iteration throughout, for which we pay a mild extra factor in the total number of rows. While in previous sparse recovery tasks the number of rows decreases across different iterations, this does not happen in our case. As aforementioned, we pay an additional $\log^{\ast}(1/\epsilon)$ factor as we shall have $\Oh(\log^{\ast}(1/\epsilon))$ iterations until all heavy hitters are recovered. 

To obtain the stronger tail guarantee of $\epsilon\|x_{-1/\epsilon^2}\|_1$, we invoke additionally our $\ell_1/\ell_1$ algorithm and the point-query algorithm of \cite{nnw12}. We note that any sub-optimality in the number of rows of the $\ell_1/\ell_1$ linear sketch would yield a worse result for our main scheme, which forces us to obtain also an improved result for the $\ell_1/\ell_1$ problem. Our new weak system and the novel idea of using the iterative loop to satisfy the $\ell_{\infty}/\ell_1$ guarantee may indicate new approaches to tackle heavy hitters tasks, and might be of interest beyond the scope of this paper. Our side-result on $\ell_1/\ell_1$ sparse recovery, is a combination of \cite{GLPS17} and \cite{LNNT}. Specifically, one can avoid the Parvaresh-Vardy list-recoverable code that \cite{GLPS17} employed, and use instead the clustering technique in \cite{LNNT}, upon the two-layer hashing schemes and linking technique in~\cite{GLPS17}. This makes possible an improved result for $\ell_1/\ell_1$ that removes the restriction on $\epsilon$ the previous work of \cite{GLPS17} was suffering from. 

	We remark that any improvement in the running time of the clustering algorithm of \cite{LNNT} immediately translates to improvement to $\ell_{\infty}/\ell_1$ and $\ell_1/\ell_1$ schemes. More specifically, a near-quadratic or near-linear algorithm for that clustering would imply a near-quadratic or near-linear (in the number of rows of the sketching matrix) time algorithm for all three of our tasks. The current state of the art for that algorithm is $\tilde{\Oh}(N^3)$ runtime on a graph of $N$ vertices, since the algorithm performs $N$ calls to a routine that finds a Cheeger cut. We also remark that we could obtain our $\ell_{\infty}/\ell_1$ result using explicit list-recoverable codes, such as Parvaresh-Vardy code, but this would lead to a slightly worse result than what we currently have. 

In the strict turnstile model, we obtain a family of strongly explicit matrices that solves the point query problem in sublinear time using the family of matrices from~\cite{nnw12}. First, we show how to stengthen the guarantee obtained by the matrix in~\cite{nnw12}, achieving a stronger tail guarantee, and then we show how to recursively combine those explicit matrices to obtain a sketching algorithm that gives the $\ell_{\infty}/\ell_1$ guarantee in sublinear time. Our result in the strict turnstile model, not only is strongly explicit, but also has a better running time than the general turnstile model and gets also improved space for some regime of $\epsilon$, namely $\epsilon \leq 2^{-\sqrt{\log n}}$. We note though that we obtain a weaker tail guarantee than in the general case, namely $ r= 1/\epsilon$, instead of $1/\epsilon^2$. This weaker guarantee stems from the lack of strongly explicit $\ell_1/\ell_1$ schemes with nearly optimal measurements, that can also answer queries in sublinear time.

\section{Preliminaries}\label{sec:preli}

For a vector $x \in \mathbb{R}^n$, we define $x_{-k}$ to be the vector obtained by zeroing out the largest $k$ coordinates in magnitude, and $\operatorname{supp}(x)$ to be the set of the non-zero coordinates of $x$. We also define $H(x,k)$ to be the index set of the largest $k$ coordinates of $x$ in magnitude. Thus, $\operatorname{supp}(x_{-k}) \cap H(x,k)= \emptyset$. We also define $H(x,k,\epsilon) = \{ i \in [n]: |x_i| \geq \epsilon/ k \|x_{-k}\|_1 \}$. We assume that the word size is $ w = \Theta(\log n)$. 

An error correcting code is a subset $\mathcal{C}\subseteq \Sigma^n$, where $\Sigma$ is a finite set called alphabet and $|\mathcal{C}| = |\Sigma|^k$ for some $k\geq n$, together with an injective encoding map $\operatorname{enc}:\Sigma^k\to\mathcal{C}$ and a decoding map $\operatorname{dec}:\mathcal{C}\to \Sigma^k$. The parameter $k$ is called the message length and $n$ is called codeword length or block length. We say that an error correcting code can correct up to $\theta$-fraction of errors if for any message $m\in\Sigma^k$ and any $x\in \Sigma^n$ such that the Hamming distance $d(\operatorname{enc}(m),x)\leq \theta n$, it holds that $\operatorname{dec}(x) = m$.

\subsection{Two-layer Hashing Schemes}\label{sec:skeleton}

In this subsection we review the two-layer hashing scheme and the linking techniques used in~\cite{GLPS17}, which will be the skeleton of construction for all our sparse recovery results. 

First we recall some definitions, taken from~\cite{GLPS17}, regarding bipartite expander, two-layer hashing and isolation of heavy hitters.

\begin{definition}[bipartite expander]
An $(n,m,d,\ell,\epsilon)$-bipartite expander is a $d$-left-regular bipartite graph $G(L\cup R, E)$ where $|L| = n$ and $|R| = m$ such that for any $S\subseteq L$ with $|S|\leq \ell$ it holds that $|\Gamma(S)|\geq (1-\epsilon)d|S|$, where $\Gamma(S)$ is the neighbour of $S$ (in $R$). When $n$ and $m$ are clear from the context, we abbreviate the expander as $(\ell,d,\epsilon)$-bipartite expander.
\end{definition}

\begin{definition}[one-layer hashing scheme]
The $(N,B,d)$ (one layer) hashing scheme is the uniform distribution on the set of all functions $f:[N]\to [B]^d$. We write $f(x) = (f_1(x),\dots,f_d(x))$, where $f_i$'s are independent $(N,B)$ hashing schemes.
\end{definition}
Each instance of such a hashing scheme induces a $d$-left-regular bipartite graph with $Bd$ right nodes. When $N$ is clear from the context, we simply write $(B,d)$ hashing scheme.

\begin{definition}[two-layer hashing scheme]
An $(N,B_1,d_1,B_2,d_2)$ (two-layer) hashing scheme is a distribution $\mu$ on the set of all functions $f:[N]\to [B_2]^{d_1d_2}$ defined as follows. Let $g$ be a random function subject to the $(N,B_1,d_1)$ hashing scheme and $\{h_{i,j}\}_{i\in[d_1],j\in[d_2]}$ be a family of independent functions subject to the $(B_1,B_2,d_2)$ hashing scheme which are also independent of $g$. Then  $\mu$ is defined to be the distribution induced by the mapping
\begin{multline*}
x\mapsto \left(h_{1,1}(g_1(x)),\dots,h_{1,d_2}(g_1(x)),h_{2,1}(g_2(x)),\dots,h_{2,d_2}(g_2(x)),\dots,\right.\\
\left.h_{d_1,1}(g_{d_1}(x)),\dots,h_{d_1,d_2}(g_{d_1}(x))\right).
\end{multline*}
\end{definition}
Each instance of such a hashing scheme gives a $d_1d_2$-left-regular bipartite graph of $B_2 d_1 d_2$ right nodes. When $N$ is clear from the context, we simply write $(B_1,d_1,B_2,d_2)$ hashing scheme. Conceptually we hash $N$ elements into $B_1$ buckets and repeat $d_1$ times; these buckets will be referred to as first-layer buckets. In each of the $d_1$ repetitions, we hash $B_1$ elements into $B_2$ buckets and repeat $d_2$ times, those buckets will be referred to as second-layer buckets.

Bipartite expander graphs can be used as hashing schemes because of their isolation property.

\begin{definition}[isolation property]
An $(n,m,d,\ell,\epsilon)$-bipartite expander $G$ is said to satisfy the $(\ell, \eta, \zeta)$-isolation property if for any set $S\subset L(G)$ with $|S|\leq \ell$, there exists $S'\subset S$ with $|S'|\geq (1-\eta)|S|$ such that for all $x\in S'$ it holds that $|\Gamma(\{x\})\setminus\Gamma(S\setminus\{x\})|\geq (1-\zeta) d$.
\end{definition}

The following lemma shows that a random two-layer hashing satisfies a good isolation property with high probability. Previous works~\cite{PS12,GLPS17} build sparse recovery systems upon this lemma.

\begin{lemma}[{\cite{GLPS17}}]\label{lem:two-layer-isolation}
Let $\epsilon > 0$, $\alpha>1$ and $(N, B_1,d_1,B_2,d_2)$ be a two-layer hashing scheme with $B_1=\Omega(\frac{k}{\zeta^\alpha\epsilon^{2\alpha}})$, $d_1=\Omega(\frac{\alpha}{\alpha-1}\cdot \frac{1}{\zeta\epsilon}\frac{\log N}{\log(B_1/k)})$, $B_2 = \Omega(\frac{k}{\zeta\epsilon})$ and $d_2 = \Omega(\frac{1}{\zeta}\log\frac{B_1}{k})$. Then with probability $\geq 1-1/N^c$, the two-layer hashing scheme with parameters prescribed above gives an $(N, B_2d_1d_2, d_1d_2, 4k, \epsilon)$ bipartite graph with the $(L, \epsilon, \zeta)$-isolation property, where $L=O(k/\epsilon)$.
\end{lemma}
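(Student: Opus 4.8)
The plan is to fix an arbitrary set $S\subseteq[N]$ of the relevant size, bound the probability that the random graph drawn from the two-layer scheme fails the desired property \emph{on} $S$, and then union-bound over all such $S$. For the isolation property this means driving the per-set failure probability below $N^{-\Omega(L)}$, since there are at most $\binom{N}{\le L}\le N^{O(L)}$ sets with $|S|\le L=O(k/\epsilon)$, after which the constant $c$ in the statement just absorbs the slack. The expansion claim --- that the graph is an $(N,B_2d_1d_2,d_1d_2,4k,\epsilon)$-expander --- is the coarse version of the same computation and I would dispatch it first: writing $|\Gamma(S)|=d_1d_2|S|-(\#\text{collision incidences inside }S)$ and noting that a fixed pair $x\ne y$ lands in a common right node at a fixed one of the $d_1d_2$ positions with probability $\tfrac1{B_1}+\tfrac1{B_2}\le\tfrac2{B_2}$, the expected number of collisions is at most $\binom{|S|}2 d_1d_2\cdot\tfrac2{B_2}\le\epsilon\,d_1d_2|S|$ once $B_2=\Omega(k/(\zeta\epsilon))$ with a large enough constant and $|S|\le4k$; a Chernoff bound over the $d_1d_2$ positions plus a union bound over $\binom{N}{\le4k}\le N^{O(k)}$ sets finishes it.

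The heart is the isolation property. Fix $S$ with $|S|=s\le L$ and split the non-private right nodes of $x\in S$ into a first-layer and a second-layer part. Let $a_x$ be the number of first-layer repetitions $i$ in which $g_i(x)=g_i(y)$ for some $y\in S\setminus\{x\}$; each such repetition can cost $x$ all $d_2$ of its right nodes there, so the first-layer part is at most $a_xd_2$. The point of taking $B_1=\Omega(k/(\zeta^\alpha\epsilon^{2\alpha}))$, a large polynomial in $1/(\zeta\epsilon)$ times $k$ while $s\le L=O(k/\epsilon)$, is that first-layer collisions inside $S$ are a sub-Poisson rarity: the number of colliding pairs in a single repetition is stochastically dominated by a Poisson variable of mean $\binom s2/B_1$, a minuscule fraction of the deviation target. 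Summing the i.i.d.\ per-repetition counts over the $d_1$ repetitions and using the Poisson-type tail $\Pr[Z\ge T]\le(e\,\E Z/T)^T$ --- rather than a plain $\exp(-\Omega(T))$ bound --- gives $\Pr[\sum_x a_x>(\epsilon\zeta/8)\,s\,d_1]\le\exp(-\Omega(\epsilon\zeta\,s\,d_1\log(B_1/k)))$. Since $\log(B_1/k)=\Theta(\alpha\log(1/(\zeta\epsilon^2)))$, the budgeted $d_1=\Omega(\tfrac{\alpha}{\alpha-1}\cdot\tfrac1{\zeta\epsilon}\cdot\tfrac{\log N}{\log(B_1/k)})$ is exactly what makes this bound $\le N^{-\Omega(L)}$ --- and this is precisely why $d_1$ carries the curious factor $\tfrac{\alpha}{\alpha-1}\cdot\tfrac1{\log(B_1/k)}$, the $\tfrac{\alpha}{\alpha-1}$ arising because the usable deviation ratio $\log(\epsilon\zeta B_1/s)=\Theta((\alpha-1)\log(1/(\zeta\epsilon^2)))$ is only an $\tfrac{\alpha-1}{\alpha}$-fraction of $\log(B_1/k)$. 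Consequently, off a bad event of probability $N^{-\Omega(L)}$, all but $(\epsilon/4)s$ elements $x\in S$ have $a_x\le(\zeta/4)d_1$.

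For the second layer, fix such a ``first-layer-good'' $x$: in a repetition $i$ where $x$ is alone among $S$ in its first-layer bucket, it still shares right node $(i,j)$ with some $y\in S$ only when $h_{i,j}$ maps the distinct first-layer buckets $g_i(x),g_i(y)$ into a common one of the $B_2$ second-layer buckets, probability $1/B_2$; hence $x$'s second-layer collision count, aggregated over its first-layer-good repetitions and all $d_2$ sub-repetitions, is dominated by $\mathrm{Bin}(d_1d_2,(s-1)/B_2)$, which --- with $B_2=\Omega(k/(\zeta\epsilon))$ taken with an appropriate constant, $d_2=\Omega(\tfrac1\zeta\log(B_1/k))$ sub-repetitions, and a Chernoff bound over the $d_1d_2$ positions --- leaves, off a further $N^{-\Omega(L)}$-probability event, all but $(\epsilon/4)s$ elements incurring at most $(\zeta/4)d_1d_2$ second-layer collisions. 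Combining the layers: apart from an $N^{-\Omega(L)}$-probability event, all but $(\epsilon/4)s+(\epsilon/4)s\le\epsilon\,s$ elements $x$ share at most $(\zeta/4)d_1\cdot d_2+(\zeta/4)d_1d_2\le\zeta d_1d_2$ right nodes with $S\setminus\{x\}$; letting $S'$ be the complement of these exceptional elements inside $S$ gives the $(L,\epsilon,\zeta)$-isolation property, and a final union bound over the $\le N^{O(L)}$ choices of $S$ (and $\le N^{O(k)}$ for the expansion claim) yields the stated $1-1/N^c$.

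\smallskip\noindent\textbf{Main obstacle.} The delicate point throughout is the union bound: because the isolation property must hold for \emph{every} $S$ with $|S|\le L$, of which there are $N^{\Theta(L)}$, the per-set failure probability must beat $N^{-\Theta(L)}$, while the only first-layer budget granted is the small $d_1=\Theta(\tfrac1{\zeta\epsilon}\cdot\tfrac{\log N}{\log(B_1/k)})$, not something proportional to $L\log N$. Affording this is impossible with a crude $\exp(-\Omega(d_1))$-type tail, and it is exactly what forces the sub-Poisson treatment of first-layer collisions --- whose payoff is the extra $\log(B_1/k)$ factor that divides into the requirement on $d_1$, obtained from $B_1$ being an $\alpha$-th-power-scale polynomial in $1/(\zeta\epsilon)$. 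Keeping the three sources of non-privacy (first-layer-bad repetitions, second-layer hash collisions, and the union-bound slack) simultaneously inside the budgeted $(\eta,\zeta)=(\epsilon,\zeta)$ and within the stated $B_1,d_1,B_2,d_2$ is the crux; everything else is a careful but routine second-moment / Chernoff computation.
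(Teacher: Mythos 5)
The paper does not prove this lemma: it is cited directly from \cite{GLPS17}, so there is no in-paper proof against which to check you. On its own merits your sketch reconstructs the right skeleton --- union bound over sets $S$, split the non-private right nodes of each $x$ into a first-layer and a second-layer contribution, and observe that the $N^{\Theta(L)}$ union bound can only be absorbed by a $(e\mu/T)^T$-type upper tail on first-layer collisions, which is exactly the source of the $\tfrac{\alpha}{\alpha-1}\cdot\tfrac1{\log(B_1/k)}$ factor in $d_1$. Your identification of $\log(\epsilon\zeta B_1/s) = \Theta\bigl((\alpha-1)\log(1/(\zeta\epsilon^2))\bigr)$ as the usable log-ratio, and hence of $\tfrac{\alpha-1}{\alpha}$ as the fraction of $\log(B_1/k)$ one actually gets to spend, is the key insight and you have it right.

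There is, however, a real gap in the second-layer step. You handle the first layer via a tail bound on the aggregate $\sum_{x\in S} a_x$ (good), but for the second layer you switch to a per-$x$ Chernoff bound and then assert that, off an $N^{-\Omega(L)}$ event, at most $(\epsilon/4)s$ elements incur $>(\zeta/4)d_1d_2$ second-layer collisions. The per-$x$ failure probability is $p = \exp(-\Omega(\zeta d_1 d_2)) = N^{-\Theta(1/(\zeta\epsilon))}$; Markov on the number of bad $x$'s then gives only $\Pr[\#\text{bad} \ge (\epsilon/4)s] \le \tfrac{4p}{\epsilon} = N^{-\Theta(1/(\zeta\epsilon))}$, which is \emph{not} $N^{-\Omega(s)}$ once $s \gg 1/(\zeta\epsilon)$ --- and $s$ may be as large as $L = \Theta(k/\epsilon)$. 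A plain total-count Markov bound over all of $S$ does not close this either: it would require $s^2/B_2 \lesssim \epsilon\zeta s$, i.e.\ $B_2 = \Omega(k/(\zeta\epsilon^2))$, stronger than the stated $B_2 = \Omega(k/(\zeta\epsilon))$. The correct move is to union-bound over candidate bad subsets $T\subseteq S$ of size $\lceil(\epsilon/4)s\rceil$ and apply Chernoff to the aggregate second-layer collision count charged to $T$ across the $d_1d_2$ independent positions: the event ``all of $T$ bad'' forces this aggregate above $(\zeta/4)d_1d_2|T| = \Theta(\epsilon\zeta s d_1 d_2)$, while its mean is a constant factor smaller once $B_2 \gtrsim s/\zeta$, giving $\exp(-\Omega(\epsilon\zeta s d_1 d_2)) = N^{-\Omega(s/\zeta)}$, which dominates $\binom{s}{\lceil(\epsilon/4)s\rceil}\binom{N}{s} \le N^{O(s)}$ and also shows that $B_2 = \Omega(k/(\zeta\epsilon)) = \Omega(s/\zeta)$ is exactly the right scale. (A secondary nitpick: the per-repetition collision-pair count in balls-in-bins is not literally Poisson-dominated; the $(e\mu/T)^T$ tail is better derived directly, e.g.\ by bounding $\E\binom{Z}{T}$.)
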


\subsection{Message Encoding}
We give a brief review of the construction and message encoding in \cite{GLPS17}. Let $\operatorname{enc}: \{0,1\}^{\log n} \rightarrow \{0,1\}^{\Oh(\log n)}$ be an error-correcting code that corrects a constant fraction of errors in linear time. For notational convenience, let $m_i = \mathrm{enc}(i)$, the codeword for the binary representation of $i$. 
Furthermore, we break $m_i$ into $d_1$ blocks of length $\Theta((\log n)/d_1)$ each, say, $m_i = (m_{i,1},\dots,m_{i,d_1})$.

Let $G$ be a $\Delta$-regular edge-expander graph on $d_1$ vertices, where $\Delta$ is an absolute constant (we may assume that $d_1$ is even and such edge expander exists by~\cite{pinsker}). Let $j$ be a node in $G$ and denote its neighbours by $\Gamma_1(j),\dots,\Gamma_\Delta(j)$. Let $idx(r, i)$ ($r\in [d_1]$ and $i\in [n]$) denote the index of the bucket where $i$ is hashed in the $r$-th first-layer repetition. Construct the message
\[
\bar m_{i,r} = m_{i,r} \circ idx(\Gamma_1(r), i) \circ \cdots \circ idx(\Gamma_\Delta(r), i),\qquad i\in [n], r\in [d_1],
\]
where $\circ$ denotes concatenation of strings and $idx(\cdot,\cdot)$ is understood as its binary representation of $\log(B_1)$ bits. 

Now for each index $i$ we have $d_1$ blocks of message $\bar m_{i,1},\dots,\bar m_{i,d_1}$. We can protect each block using a constant-rate error correcting code which tolerates a constant fraction of error and decodes in polynomial time, so that if we can recover a fraction of $\bar m_{i,r}$ we can recover the entire message $\bar m_{i,r}$ efficiently. 
The high-level idea is then to recover a good fraction of $\{\bar m_{i,r}\}_{r\in [d]}$ for a good fraction of heavy hitters $i$, so that we can recover $m_i$ using the linking information embedded in $\bar m_{i,r}$ and the clustering algorithm in~\cite{LNNT}. Finally we decode $m_i$ to obtain the corresponding index $i$.

The following lemma is crucial in bounding the number of missed heavy hitters in iteration, modified from {\cite{GLPS17}}. The proof is similar to that in \cite{GLPS17}.

\begin{lemma}\label{lem:decoy}
Let $\theta, \epsilon\in (0,1)$, $\delta\in(0,\frac{1}{2}]$ and $\beta,\zeta > 0$ such that $0 < \zeta < \delta - \frac{64\beta}{\theta}$.
Suppose that $G$ is a $(4s,d,\beta \epsilon)$-bipartite expander which satisfies the $(\frac{6}{\gamma\epsilon},\frac{\epsilon\theta}{12},\zeta)$-isolation property, where $\gamma\in [\frac{\theta}{s},1]$. Let $x\in \mathbb{R}^n$ be a vector which can be written as $x = y+z$, where $y$ and $z$ have disjoint supports, $|\supp(y)|\leq s$ and $\|z\|_1\leq 3/2$. For each $i\in [n]$ define the multiset $E_i$ as
\[
E_i = \left\{ \sum_{(u,v)\in E} x_u \right\}_{v\in \Gamma(\{i\})}.
\]
Note that $|E_i| = d$ since it is a multiset. Then, for every $D \subset [n], |D| \leq 2s$, we have that \[
\left|\left\{i\in D: |x_i - w| \geq \frac{\epsilon\gamma}{4}\text{ for at least }(1-\delta)d\text{ values }w\text{ in }E_i\right\}\right| \leq \frac{\theta}{\gamma}.
\]
\end{lemma}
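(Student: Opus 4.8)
The plan is to bound the ``bad'' set
\[
\mathrm{Bad} = \Bigl\{\, i \in D : |x_i - w| \geq \tfrac{\epsilon\gamma}{4}\ \text{for at least}\ (1-\delta)d\ \text{values}\ w \in E_i \,\Bigr\}
\]
by a counting argument over edges. Suppose for contradiction that $|\mathrm{Bad}| > \theta/\gamma$; since $\theta/\gamma \le s$ (as $\gamma \ge \theta/s$), and $|\mathrm{Bad}| \le |D| \le 2s \le 4s$, we may apply the isolation property to a subset $S \subseteq \mathrm{Bad}$ of size $|S| = \lceil \theta/\gamma\rceil \le \tfrac{6}{\gamma\epsilon}$ (here $\gamma\epsilon < 6$ makes $\theta/\gamma \le 6/(\gamma\epsilon)$, so $S$ is within the isolation range). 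By the $(\frac{6}{\gamma\epsilon},\frac{\epsilon\theta}{12},\zeta)$-isolation property there is $S' \subseteq S$ with $|S'| \ge (1 - \frac{\epsilon\theta}{12})|S|$ such that every $i \in S'$ has at least $(1-\zeta)d$ neighbours $v$ with $v \notin \Gamma(S \setminus \{i\})$, i.e.\ ``private'' buckets.

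The core of the argument is: for $i \in S'$, look at a private bucket $v \in \Gamma(\{i\})$. In $\sum_{(u,v)\in E} x_u$ the contribution of $i$ is exactly $x_i$, and no other element of $S$ (hence no large coordinate of $y$ outside a small leftover, and no element of $\mathrm{Bad}$) lands in $v$. The remaining mass in bucket $v$ comes from (a) coordinates of $y$ that are not in $S$ and whose edge expansion has caused collisions --- there are few such collisions because $G$ is a $(4s, d, \beta\epsilon)$-expander, so the total number of colliding edges among $\supp(y)$ is at most $\beta\epsilon d |\supp(y)| \le \beta\epsilon d s$; and (b) coordinates of $z$, whose total $\ell_1$ mass is $\le 3/2$ spread over $d_1d_2 = $ (number of) buckets, and which by an averaging/Markov argument contribute at least $\frac{\epsilon\gamma}{4}$ to only a small fraction of buckets. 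Combining: for each $i \in S'$, the number of buckets $v \in \Gamma(\{i\})$ for which $|(\sum_{(u,v)\in E}x_u) - x_i| \ge \frac{\epsilon\gamma}{4}$ is bounded, summing the ``private but spoiled by $y$-collisions'' count and the ``spoiled by heavy $z$-buckets'' count, by something like $(\zeta + \frac{64\beta}{\theta})d$ on average over $S'$. But membership of $i$ in $\mathrm{Bad}$ forces this count to be $\ge \delta d$ (those are exactly the $w \in E_i$ witnessing $|x_i - w| \ge \frac{\epsilon\gamma}{4}$, minus the $\le \zeta d$ non-private ones). This contradicts $\zeta < \delta - \frac{64\beta}{\theta}$ once we do the averaging over $S'$ carefully, using $|S'| \ge (1-\frac{\epsilon\theta}{12})|S|$ to absorb the $i \notin S'$ slack. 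Hence $|\mathrm{Bad}| \le \theta/\gamma$.

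I expect the main obstacle to be the double-counting bookkeeping in step two: I need to charge each ``spoiled'' private bucket either to a collision edge within $\supp(y)$ (globally $\le \beta\epsilon d s$ of them, so on average $\le \beta\epsilon d s / |S'|$ per $i$, and with $|S'| \approx \theta/\gamma$ and $\gamma s \ge \theta$ this is $O(\beta\epsilon d s \gamma/\theta) \le O(\beta d /\theta)$ after using $\epsilon\gamma s \le \epsilon s \le $ bounded --- the precise constant $64$ comes from chasing these inequalities) or to a bucket carrying $\ge \frac{\epsilon\gamma}{4}$ of $z$-mass (of which there are $\le \frac{3/2}{\epsilon\gamma/4} = \frac{6}{\epsilon\gamma}$ total, and each is a neighbour of $\le$ few elements of $S'$ --- but here I must be careful, since a $z$-heavy bucket could be private to many distinct $i$'s; this is handled because private buckets of distinct $i \in S'$ are disjoint, so each $z$-heavy bucket is charged to at most one $i$, giving a global bound $\frac{6}{\epsilon\gamma}$ on such charges). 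Assembling these two global bounds, dividing by $|S'| \ge \frac{1}{2}\cdot\frac{\theta}{\gamma}$ (say), and comparing with the per-$i$ lower bound $(\delta - \zeta)d$ yields the required contradiction. The arithmetic tracking of the constants (the $64$, the $12$, the factor between $\theta/\gamma$ and $6/(\gamma\epsilon)$) is routine but must be done with care to land inside the stated hypothesis $\zeta < \delta - \frac{64\beta}{\theta}$; I would organize it so the expander bound contributes the $\frac{64\beta}{\theta}$ term and the isolation/non-private buckets contribute the $\zeta$ term, with the $z$-tail and the $S \setminus S'$ slack swallowed by the constant factors and by $\delta \le \frac12$.
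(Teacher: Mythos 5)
Your high-level skeleton (contradiction, isolation, expansion, small-residual counting) is in the right family, but the proof as proposed has a genuine gap, and it diverges from the paper's argument in a way that matters. The paper applies the isolation property not to a subset of the bad set but to the \emph{enlarged} set $T = D\cup\{i:|x_i|\ge\epsilon\gamma/4\}$, which by construction contains every coordinate large enough to produce $\ge\epsilon\gamma/4$ noise by itself; with $|T|\le 6/(\epsilon\gamma)$ this stays inside the isolation range. Consequently, in an isolated bucket of a decoy the \emph{entire} interfering mass comes from $x-x_{[t]}$, a vector with $\|\cdot\|_\infty\le\epsilon\gamma/4$ and $\|\cdot\|_1\le O(1)$, and a single cited $\ell_1$ bound (Lemma~3.3 of GLPS17, using the $(4s,d,\beta\epsilon)$-expansion) gives $\|(\Phi(x-x_{[t]}))_{\Gamma(D)}\|_1\le 8\beta\epsilon d$. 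The contradiction is then an $\ell_1$ comparison: each isolated decoy must carry at least $(\delta-\zeta)d\cdot\epsilon\gamma/4 > 16\beta\epsilon\gamma d/\theta$ of that noise, and $\theta/(2\gamma)$ of them together exceed $8\beta\epsilon d$. No lower bound on $d$ is needed beyond what the other hypotheses already imply.

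Your version instead isolates only $S\subseteq\mathrm{Bad}$ and then tries to account for large non-$S$ coordinates by two separate \emph{counting} bounds: a $y$-collision count $\lesssim\beta\epsilon d s$ (really this needs expansion of $\supp(y)\cup S$, which you should state) and a $z$-heavy-bucket count $\le 6/(\epsilon\gamma)$. The trouble is that these two bounds have incompatible shapes when compared against what the decoy condition forces, namely $|S'|\cdot(1-\delta-\zeta)d\gtrsim (\theta/\gamma)\cdot(64\beta/\theta)d = 64\beta d/\gamma$ noisy private buckets in aggregate. The $y$-term matches only under $\epsilon s\gamma=O(1)$ (implicit in the paper as well), but the $z$-term requires $6/(\epsilon\gamma)\ll\beta d/\gamma$, i.e.\ $d\gtrsim 1/(\beta\epsilon)$, which is \emph{not} among the hypotheses of the lemma. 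By contrast, when the noise is bounded in $\ell_1$ (as the paper does), dividing $8\beta\epsilon d$ by the per-bucket threshold $\epsilon\gamma/4$ automatically gives a noisy-bucket count of $32\beta d/\gamma$ with the right $\gamma$-dependence and no spurious $d$ constraint. So the fix is to switch from counting spoiled buckets by source to bounding the total $\ell_1$ noise in $\Gamma(D)$ after absorbing all heavy coordinates into the set you isolate; your proposal as written does not land inside the stated hypothesis $\zeta<\delta-64\beta/\theta$ without the extra assumption on $d$.
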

\begin{proof}
Suppose that $|D| > \theta s$, otherwise the result holds automatically. 
Assume that $|x_1|\geq |x_2|\geq \cdots \geq |x_n|$. 
Let $T = D\cup \{i: |x_i|\geq \epsilon\gamma/4\}$, then $t := |T|\leq \|z\|_1/(\epsilon\gamma/4) + |D| 
\leq 6/(\epsilon\gamma)$. 

Note that $|x_{t+1}| \leq \epsilon\gamma/4$. Taking $\alpha = 2$ in \cite[Lemma 3.3]{GLPS17}, we know that
\[
\|(\Phi(x-x_{[t]}))_{\Gamma(D)}\|_1 \leq 4\cdot\beta\epsilon d\left(\frac{3}{2} 
+ 2s\cdot\frac{\epsilon\gamma}{4}\right) \leq 8\beta\epsilon d.
\]

By the isolation property, there are at most $\frac{6}{\epsilon\gamma}\cdot \frac{\epsilon\theta}{12} = \frac{\theta}{2\gamma}$ elements in $T$ which are not isolated in at least $(1-\zeta) d$ nodes from other elements in $T$. This implies that at least $\theta/(2\gamma)$ elements in $D$ 
are isolated in at least $(1-\zeta) d$ nodes from other elements in $T$.

A decoy at position $i$ receives at least $\epsilon\gamma/4$ noise in at least $(\beta-\zeta) d$ isolated nodes of $\Gamma(\{i\})$, hence in total, a decoy element receives at least $\epsilon\gamma(\beta-\zeta)d/4$ noise. Therefore at least $\theta/(2\gamma)$ decoys overall should receive noise at least
\[
\frac{\epsilon\gamma(\beta-\zeta)d}{4}\cdot \frac{\theta}{2\gamma} > 8\beta\epsilon d\geq \|(\Phi(x-x_{[t]}))_{\Gamma(D)}\|_1,
\]
which is a contradiction. Therefore there are at most $\theta s$ decoys. 
\end{proof}

We remark that the construction in~\cite{GLPS17} is similar to the partition setup in~\cite{LNNT}, where the linking information and the message block are `absorbed' into the fashion coordinates are split into buckets so that recovering a heavy hitter in a bucket will automatically recover that information correctly instead of recovering the information from second-layer buckets with an error correcting code. In this paper, however, we opt for the two-layer construction for the `for-all' guarantee, for the presentation would be simpler for our sparse recovery results as some auxiliary lemmata are already proved in~\cite{GLPS17}.

\section{A Sublinear Time $\ell_{1}/\ell_1$ Algorithm} \label{sec:intro}

The result is almost immediate by replacing the list-recoverable code in~\cite{GLPS17} with the clustering algorithm in~\cite{LNNT}, which we now give a brief review. The overall algorithm is an iterative algorithm of $\Theta(\log k)$ iterations. Each iteration is called a \textit{weak system}, which (i) recovers at least a constant fraction of the remaining heavy hitters (and hence all heavy hitters can be recovered in $\Theta(\log k)$ iterations) and (ii) introduces only a small amount of error (see, e.g.~\cite{PS12,GLPS17}). The introduced error comes from two sources: the estimation error of those recovered heavy hitters and some small coordinates that can be safely ignored. 

We now present the precise statement of our weak system 
, which is central to our $\ell_1/\ell_1$ result. The proof is almost identical to that in~\cite{GLPS17}, nevertheless we include it for completeness.

\begin{lemma}[Weak system]\label{lem:weak_ell1/ell_1}
Suppose that $s\leq \sqrt{n}$ and $\epsilon\in(0,1)$. There exist a linear sketch $\Phi\in \R^{m\times n}$ and an algorithm $\textsc{WeakSystem}(x,s,\epsilon)$ satisfying the following:
\begin{itemize}[parsep=0pt,partopsep=0pt]
	\item For any vector $x\in \R^n$ that can be written as $x = y+z$, where $y$ and $z$ have disjoint supports, $|\supp(y)|\leq {s}$, $\|y\|_\infty\geq \epsilon/(2s)$ and $\|z\|_1\leq 3/2$, given the measurements $\Phi x$, the decoding algorithm $\mathcal{D}$ returns $\hat x$ such that $x$ admits the decomposition of 
	\begingroup
\setlength{\abovedisplayskip}{3pt}
\setlength{\belowdisplayskip}{3pt}	
	\[x = \hat x + \hat y + \hat z,\]
	\endgroup
where $|\supp(\hat x)| = s$, $|\supp(\hat y)|\leq s/8$ and $\|\hat z\|_1\leq \|z\|_1+\epsilon/4$. Intuitively, $\hat y$ and $\hat z$ will be the head and the tail of the residual $x-\hat x$, respectively;
	\item $m = \Oh(\epsilon^{-2}s\log n)$;
	\item $\mathcal{D}$ runs in $\Oh(s^{3}\poly(1/\epsilon,\log n))$ time.
\end{itemize}
\end{lemma}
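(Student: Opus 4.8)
\section*{Proof proposal}

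The plan is to build the weak system exactly as in \cite{GLPS17}, the only change being that the list-recoverable code is replaced by the clustering routine of \cite{LNNT}. First I would instantiate \Lemma{two-layer-isolation} with $k=s$ and with its isolation/expansion parameter set to $c\epsilon$ for a suitably small absolute constant $c$ (and $\alpha=2$, $\zeta$ a small absolute constant). This yields a bipartite graph $G$ with $\Oh(\epsilon^{-2}s\log n)$ right vertices that is a $(4s,d_1d_2,c\epsilon)$-expander and has the $(L,c\epsilon,\zeta)$-isolation property with $L=\Theta(s/\epsilon)$, where $B_1=\Theta(s/\epsilon^4)$, $B_2=\Theta(s/\epsilon)$, $d_1=\Theta(\epsilon^{-1}\log n/\log(1/\epsilon))$ and $d_2=\Theta(\log(1/\epsilon))$. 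On top of $G$ I would lay the message-encoding construction of \cite{GLPS17} (recalled above): fix a linear-time, constant-rate code $\operatorname{enc}$, spread $\operatorname{enc}(i)$ over the $d_1$ first-layer repetitions, append to the $r$-th block the first-layer bucket indices $idx(\Gamma_1(r),i),\dots,idx(\Gamma_\Delta(r),i)$ of $i$ in the repetitions adjacent to $r$ in a constant-degree edge-expander on $[d_1]$, and protect the resulting block $\bar m_{i,r}$ by an inner constant-rate code. The sketch $\Phi$ then records, for every second-layer bucket $b$, the sum $\sigma_b=\sum_{i\to b}x_i$ together with the partial sums $c_{b,\ell}=\sum_{i\to b,\;\bar m_{i,r}[\ell]=1}x_i$ over the bit positions $\ell$ of $\bar m_{\cdot,r}$. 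The number of rows is $\Oh(\epsilon^{-2}s\log n)$ by exactly the accounting in \cite{GLPS17}, which is where the hypothesis $s\le\sqrt n$ enters.

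Decoding proceeds in five steps: (1) call $b$ \emph{selected} if $|\sigma_b|\ge\epsilon/(4s)$; (2) for each selected $b$, recover each bit of $\bar m_{\cdot,r}$ by comparing $c_{b,\ell}$ against $\sigma_b$ and run the inner decoder to obtain a candidate (message block, $\Delta$ linking indices) pair; (3) feed all candidate pairs (tagged by repetition) to the clustering algorithm of \cite{LNNT}, which groups them by consistency of their linking indices along the edge-expander; (4) for each cluster covering a $(1-\Oh(\zeta))$-fraction of the $d_1$ repetitions, concatenate its blocks, run the outer decoder of $\operatorname{enc}$ to get an index $i$, and set $\hat x_i:=\median_{b\in\Gamma(\{i\})}\sigma_b$; (5) output $\hat x$ supported on the $s$ largest-magnitude estimates.

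For correctness, put $S=\supp(y)$, so $|S|\le s\le L$. By \Lemma{two-layer-isolation} all but at most $c\epsilon s$ coordinates $i\in S$ are well isolated, i.e.\ $\Gamma(\{i\})$ meets $\Gamma(S\setminus\{i\})$ in at most a $\zeta$-fraction of its $d_1d_2$ buckets; and by the $\ell_1$-leakage estimate of \cite[Lemma~3.3]{GLPS17} (the bound underlying \Lemma{decoy}) the $z$-mass falling on $\Gamma(S)$ is $\Oh(c\epsilon\,d_1d_2)$, so a well-isolated $i$ satisfies $|\sigma_b-x_i|\le\zeta\epsilon/(8s)$ on all but an $\Oh(\zeta)$-fraction of its buckets. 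In those buckets $|\sigma_b|\ge\epsilon/(4s)$ so $b$ is selected, the bit comparisons reproduce $\bar m_{i,r}$ up to an $\Oh(\zeta)$-fraction of errors so the inner decoder returns $\bar m_{i,r}$ exactly, and hence the clustering receives correct $(m_{i,r},\text{linking})$ data for a $(1-\Oh(\zeta))$-fraction of the repetitions; since genuine heavy hitters yield linking-consistent families whereas the noise-induced decoys do not, \cite{LNNT} puts this data into one large cluster, the outer decoder (which tolerates an $\Oh(\zeta)$-fraction of corrupted blocks for $\zeta$ small) returns $i$, and the median estimate obeys $|\hat x_i-x_i|\le\zeta\epsilon/(8s)$. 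Bounding the number of wrongly-recovered coordinates via \Lemma{decoy} with $\gamma=\Theta(1/s)$ and constants $\theta,\beta$ compatible with the expansion/isolation quality of $G$, and adding the $c\epsilon s$ non-isolated coordinates, at most $s/8$ coordinates of $S$ end up unrecovered or spuriously recovered provided $c,\zeta$ are small enough; letting $\hat y:=(x-\hat x)$ restricted to those $\le s/8$ coordinates and $\hat z:=x-\hat x-\hat y$, the residual $\hat z$ equals $z$ plus the estimation errors on the recovered coordinates, whose $\ell_1$ mass is at most $s\cdot\zeta\epsilon/(8s)\le\epsilon/4$, so $\|\hat z\|_1\le\|z\|_1+\epsilon/4$. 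Finally, the running time is dominated by the clustering call, whose input has $N=\Oh(\epsilon^{-2}s\log n)$ candidate pairs (at most the number of buckets) and which runs in $\tilde\Oh(N^3)=\Oh(s^3\poly(1/\epsilon,\log n))$ time; bucket scanning, inner/outer decoding, and the medians all fit in the same budget.

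The main obstacle is the analysis of the interface between the two-layer hashing/linking skeleton and the \cite{LNNT} clustering primitive in the ``for-all'' setting: one has to check that the linking-consistency structure produced by well-isolated heavy hitters is exactly what the clustering algorithm exploits, that the noise-induced decoy pairs --- whose number is controlled through \Lemma{decoy} --- are too few and too unstructured to spawn a spurious cluster covering a $(1-\Oh(\zeta))$-fraction of the repetitions, and that the several constants ($c$, $\zeta$, the corrupted-block fraction absorbed by $\operatorname{enc}$, and $\theta,\beta,\gamma$ in \Lemma{decoy}) can be fixed simultaneously so that the head has at most $s/8$ coordinates and the tail grows by at most $\epsilon/4$. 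The isolation and leakage ingredients are already available from \Lemma{two-layer-isolation}, \Lemma{decoy} and \cite{GLPS17}, so this bookkeeping-plus-clustering step is the only genuinely new piece and is where most of the effort would go.
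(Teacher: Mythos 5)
Your plan follows the paper's proof structure closely (two-layer hashing from \Cref{lem:two-layer-isolation}, message/linking encoding in the style of \cite{GLPS17}, decoy counting via \Cref{lem:decoy}, clustering from \cite{LNNT}, and the same final error accounting), so the overall approach is the right one. However, there is a concrete parameter gap that would break the linking step.

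You set $\alpha=2$ and $B_1=\Theta(s/\epsilon^4)$, which is the minimum forced by \Cref{lem:two-layer-isolation}, together with $d_2 = \Theta(\log(B_1/s)) = \Theta(\log(1/\epsilon))$. But each block $\bar m_{i,r}$ must carry $\Delta$ first-layer bucket indices of $\log B_1 = \Theta(\log s + \log(1/\epsilon))$ bits each, and the second-layer encoding capacity per first-layer bucket is only $\Theta(d_2)=\Theta(\log(1/\epsilon))$ bits. When $s$ is large relative to $1/\epsilon$ (e.g.\ $\epsilon$ a constant and $s$ up to $\sqrt n$), the linking indices simply do not fit, so the chunk graph cannot be built. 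The paper avoids this by taking $B_1=\Theta(s^\alpha/\epsilon^{2\alpha})$ with $\alpha\in(1,2)$, which makes $\log(B_1/s)=\Theta\bigl((\alpha-1)\log s+\log(1/\epsilon)\bigr)$ a constant fraction of $\log B_1$, so that $L=\Theta(\log(B_1/s))+\Delta\log B_1\le d_2/2$ can be enforced by a constant-factor increase of $d_2$. This inflation of $B_1$ does not affect the measurement count $B_2 d_1 d_2=\Theta(\epsilon^{-2}s\log n)$ because $d_1 d_2=\Theta(\epsilon^{-1}\log n)$ is $B_1$-independent, and it only mildly affects decoding time ($O(B_1\poly(d_2))=O(s^\alpha\poly(1/\epsilon,\log n))$ per repetition, still dominated by the clustering).

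A secondary issue is your bit-encoding. As written, you record $\sigma_b$ plus a partial sum $c_{b,\ell}$ \emph{for every bit position $\ell$} of $\bar m_{\cdot,r}$ in each second-layer bucket $b$, which multiplies the measurement count by $\Theta(L)=\Theta(\log B_1)$ and breaks the $O(\epsilon^{-2}s\log n)$ target. The paper instead encodes exactly one bit per second-layer measurement by replacing a matrix entry $a$ with a $2\times 1$ block $\bigl(\begin{smallmatrix}a\\0\end{smallmatrix}\bigr)$ or $\bigl(\begin{smallmatrix}0\\a\end{smallmatrix}\bigr)$, so that across the $d_2=\Omega(2L)$ second-layer repetitions each heavy hitter's entire block can be recovered from $2$ measurements per bit and the total stays $\Theta(B_2 d_1 d_2)$. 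If you intended a single bit per second-layer repetition this is just imprecise wording, but the count argument needs to be restated carefully. Once $B_1$ is corrected and the encoding is stated to use a bounded constant number of measurements per second-layer bucket, the rest of your argument (decoy bound with $\gamma=\Theta(1/s)$, $\le s/8$ missed heavy hitters, $\|\hat z\|_1\le\|z\|_1+\epsilon/4$, and $\tilde O(N^3)$ clustering time) matches the paper.
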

\begin{proof}
We follow the construction and the argument as in~\cite{GLPS17}. We instantiate the two-layer hashing and the encoding scheme as in \Cref{sec:skeleton}, where $\alpha\in (1,2)$, $B_1 = \Theta(s^\alpha/\epsilon^{2\alpha})$, $d_1 = \Theta(\epsilon^{-1}\frac{\log n}{\log(B_1/s)})$, $B_2 = \Theta(s/\epsilon)$ and $d_2 = \Theta(\log(B_1/s))$. By \Cref{lem:two-layer-isolation} we can find a two-layer hashing with these prescribed parameters which satisfies  $(4s,d_1d_2,O(\epsilon))$-expansion property and $(O(s/\epsilon),O(\epsilon),\Theta(1))$-isolation property. It is also easy to verify that the length of each message block $\bar m_{i,r}$ is $L = \Theta(\log(B_1/s)) + \Delta\log(B_1) \leq d_2/2$ if we choose $d_2$ large enough. We can use two second-layer measurements to encode $1$ bit of message by replacing a single entry $a$ in the measurement matrix with a $2\times 1$ block of $\left(\begin{smallmatrix} a \\ 0\end{smallmatrix}\right)$ or $\left(\begin{smallmatrix} 0 \\  a\end{smallmatrix}\right)$, depending on the bit to encode. To decode the bit, suppose that the two corresponding measurements are $\left(\begin{smallmatrix}a \\ b\end{smallmatrix}\right)$ and we convert them back to $0$ if $|a|<|b|$ and $1$ if $|a|\geq |b|$. When the heavy hitter is isolated and the noise is small in a bucket, the bit is expected to be the corresponding bit in the message for that heavy hitter. See Section~\ref{sec:example1} for a toy example of such scheme. 

Next we shall show that we can recover most bits of the message for at least a large constant fraction of heavy hitters. Invoking Lemma~\ref{lem:decoy} with $\delta=O(1)$, $\theta=O(1)$, $\beta=O(1)$ and $\gamma=1/s$, and following the argument in~\cite[Section 4.1]{GLPS17}, we have good estimates for all but at most $s/8$ heavy hitters (elements in $\supp(y)$). Call those heavy hitters \emph{well-estimated}. The two-layer hashing eventually hashes $n$ coordinates into $B_2$ buckets and repeat $d_1 d_2$ times, and we know that each well-estimated heavy hitter $i$ receives small noise in at least $(1-\delta)d_1 d_2$ repetitions. This implies that there exist $\delta_1$ and $\delta_2$ such that for each well-estimated heavy hitter $i$, there exist $(1-\delta_1)d_1$ first-layer repetitions such that in each such first-layer repetition $r$ the heavy hitter $i$ receives small noise in at least $(1-\delta_2)d_2$ second-layer repetitions. For each such pair $(i,r)$, we can recover at least $(1-\delta_2)$ fraction of the message $\bar m_{i,r}$, and if we protect $\bar m_{i,r}$ using a constant-rate error-correcting code (e.g.~Reed-Solomon code) that can tolerate up to $\delta_2$ fraction of error, we shall recover $\bar m_{i,r}$ in entirety. To summarize, for each well-estimated heavy hitter $i$, we can recover $\bar m_{i,r}$ for at least $(1-\delta_1)d_1$ values of $r \in [d_1]$. We note that $\delta_1$ can made arbitrarily small by adjusting the constants in the two-layer construction and making $\delta$ arbitrarily small.

Now we construct the chunk graph as in~\cite{LNNT}. 
The chunk graph has $B_1 d_1$ nodes, indexed by pairs $(b,r)$ for $b\in[B_1]$ and $r\in [d_1]$. For each bucket $b$ in the first-repetition $r$, we recover a message of length $L$, break it up into blocks of the same structure as in $\hat m$ and extract the linking information $q_1(b,r),\dots,q_\Delta(b,r)$. We say in $\tilde G$ the node $(b,r)$ makes suggestion to connect to $(q_\ell(b,r),\Gamma_\ell(r))$, and we add an edge if both endpoints suggest each other. 
By the argument in \cite[Lemma 2]{LNNT}, a well-estimated heavy hitter $i$ corresponds to an $\epsilon_0$-spectral cluster of $\tilde G$ for some small $\epsilon_0 > 0$. The spectral clustering algorithm~(\cite[Theorem 1]{LNNT}) will find all those spectral clusters, recovering a constant fraction of message $m_i$ and enabling us to identify the index $i$ of the heavy hitter. In each first-layer we retain the bucket of magnitude at least $\epsilon/(4s)$ so there are $O(s/\epsilon)$ buckets, and we therefore have a candidate list of size $\Oh(s/\epsilon)$ which misses at most $s/8$ heavy hitters. Finally we evaluate every candidate and retain the biggest $s$ ones (in magnitude).

Each recovered coordinate is estimated to within $\epsilon/(4s)$, thus
\[
\|\hat z\|_1 \leq \|z\|_1 + |\supp(\hat x)|\cdot \frac{\epsilon}{4s} \leq \|z\|_1 + \frac{\epsilon}{4}.
\]

The total number of measurements is $B_2 d_1 d_2 = \Oh(\epsilon^{-2} s\log n)$. For each first-layer repetition, we enumerate all coordinates in the bucket of size $B_1$, and decode the associated message (which is of length $d_2$), which takes time $\Oh(B_1 \poly(d_2)) = \Oh(s^{\alpha}\poly(1/\epsilon,\log n))$. We then run the spectral clustering algorithm on a graph of size $O(s/\epsilon\cdot d_1)$ in time $\tilde\Oh((s/\epsilon\cdot d_1)^3) = \Oh(s^3\poly(1/\epsilon, \log n))$. To obtain the indices of the candidates, We decode $\Oh(s/\epsilon\cdot d_1) = \Oh(\epsilon^{-2} s \log n)$ messages, and the decoding algorithm on each $\Theta(\log n)$-bit-long $m_i$ with a constant fraction corruption runs in time $O(\poly(\log n))$. Lastly we estimate each of the candidates and retain the biggest $s$ ones, which takes time $\Oh((s/\epsilon)d_1\cdot B_2d_1d_2) = \Oh(s^2 \poly(1/\epsilon,\log n))$. The overall runtime is dominated by the clustering algorithm and is therefore $\Oh(s^3\poly(1/\epsilon, \log n))$.
\end{proof}
\section{A Sublinear Time $\ell_{\infty}/\ell_1$ Algorithm} \label{sec:ell_infty}


For ease of exposition and connection with the previous algorithms, we set $k= \lceil 1/\epsilon \rceil$ in this section.

First, we prove the following weaker theorem, and shall show how to bootstrap this theorem in order to obtain Theorem \ref{thm:ell_infty/ell_1} in \Cref{sec:combining}.

\begin{theorem}\label{thm:weakerell_infty/ell_1}
There exists a linear sketch $\Phi \in \mathbb{R}^{m \times n}$ such that for every $x\in\R^n$, we can, given $\Phi x$, find an $\Oh(k)$-sparse vector $\hat{x}$ such that
$\|x-\hat{x}\|_{\infty} \leq \frac{1}{k} \|x_{-k}\|_1$ 
in $\Oh(k^6\poly(\log n))$ time. The number of rows of $\Phi$ equals $m = \Oh(k^2\log n\log^\ast k)$. The space needed to store $(y,\Phi)$ is $\Oh(k^2 \log n\cdot \log^\ast k\cdot \log \log k)$ words.
\end{theorem}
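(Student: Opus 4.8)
The plan is to derive \Theorem{weakerell_infty/ell_1} from an outer iterative loop of $O(\log^\ast k)$ rounds, where in each round we run a strengthened weak system on the current residual and subtract off the heavy hitters it recovers. The two ingredients are (i) a generalization of the weak system of \Lemma{weak_ell1/ell_1} that, instead of recovering only a constant fraction of the heavy hitters, recovers \emph{all but an $\eta$-fraction} for a tunable small $\eta$ while simultaneously producing $\ell_\infty$-accurate estimates, and (ii) a choice of per-round parameters that makes the number of still-unrecovered heavy hitters collapse in $O(\log^\ast k)$ steps, unlike the pure $\ell_1/\ell_1$ setting where one is content with a constant-fraction reduction.

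First I would set up the generalized weak system. Keeping the two-layer hashing and message-encoding framework of \Cref{sec:skeleton}, I would feed the decoy lemma (\Lemma{decoy}) parameters $\theta,\gamma,\delta,\zeta,\beta$ tuned so that the number of missed heavy hitters is at most $\theta/\gamma$, as small as the current round demands; realizing the needed $(4s,d,\beta\epsilon)$-expansion together with the $(\,\cdot\,,\epsilon\theta/12,\zeta)$-isolation property is then a matter of plugging the corresponding parameters into the two-layer isolation lemma (\Lemma{two-layer-isolation}). The crucial point---and the reason we can afford strong isolation---is that we already budget $\Theta(\epsilon^{-2})=\Theta(k^2)$ ``columns'' per bucket family, so driving $\beta$ (hence $\eta$) down costs only lower-order factors in $d_1,d_2$ and leaves the per-round row count at $O(k^2\log n)$. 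The recovery pipeline (bit encoding via $2\times 1$ blocks, block-wise error correction, the chunk graph, the spectral clustering algorithm of \cite{LNNT}, and decoding $m_i$) is exactly as in the proof of \Lemma{weak_ell1/ell_1}; the only changes are that the candidate list now misses an $\eta$-fraction of $\supp(y)$ rather than a constant fraction, and that we keep every bucket whose magnitude exceeds the (much smaller) $\ell_\infty$-detection threshold, so that each recovered coordinate is estimated to additive error $O(\epsilon\|z\|_1/\log^\ast k)$ instead of $O(\epsilon\|z\|_1)$.

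Next, the outer loop. After rescaling we may assume $\|x_{-k}\|_1=\Theta(1)$. I would maintain the invariant $x^{(t)}=y^{(t)}+z^{(t)}$ with disjoint supports, $|\supp(y^{(t)})|\le s_t$ and $\|z^{(t)}\|_1\le\tfrac32$, starting from $s_1=O(k)$ (the number of coordinates of magnitude $\ge\frac1k\|x_{-k}\|_1$ is $O(k)$ by a Markov argument). In round $t$ we apply a fresh independent copy of the generalized weak system, obtain $\hat x^{(t)}$ supported on the recovered coordinates, and set $x^{(t+1)}=x^{(t)}-\hat x^{(t)}$. Choosing $\eta_t$ so that $s_{t+1}=O(\log\log s_t)$ (missed fraction $\eta_t=O(\epsilon\log\log(1/\epsilon))$ in the first round), the sequence $s_1,s_2,\dots$ reaches $O(1)$ after $T=O(\log^\ast k)$ rounds, whereupon a single cleanup round---recovering the last $O(1)$ heavy hitters within the same budget is easy---finishes. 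Since every round reuses the same $O(k^2\log n)$ rows, the total sketch length is $O(k^2\log n\log^\ast k)$; the recovered coordinates across all rounds number $O(k)+O(\log\log k)+\cdots=O(k)$, so $\hat x$ is $O(k)$-sparse, and the extra $\log\log k$ in the stored-description bound comes from the mildly larger hash families the small-$\eta$ regime requires.

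Finally, the error accounting, which I expect to be the main obstacle. Two things must be checked across all $\log^\ast k$ rounds. (a) The weak-system hypotheses are preserved: a coordinate recovered in round $t$ is left with residual $O(\|z^{(t)}\|_1/(k\log^\ast k))$, so it migrates into the tail adding at most $s_t\cdot O(\|z^{(t)}\|_1/(k\log^\ast k))=O(\|z^{(t)}\|_1/\log^\ast k)$ to $\|z^{(t+1)}\|_1$; iterating this multiplicative $(1+O(1/\log^\ast k))$ growth over $O(\log^\ast k)$ rounds keeps $\|z^{(t)}\|_1$ below a fixed constant, and $\|y^{(t)}\|_\infty$ stays above the detection threshold as long as $s_t>0$. (b) The final $\ell_\infty$ error at any coordinate $i$ is just the estimation error of the unique round in which $i$ was recovered, which is $O(\|z^{(t)}\|_1/(k\log^\ast k))\le\frac1k\|x_{-k}\|_1$ once the constants are chosen small; coordinates never recovered have magnitude below $\frac1k\|x_{-k}\|_1$ by construction of the per-round thresholds, so reporting them as $0$ is within the guarantee. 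The running time is $T=O(\log^\ast k)$ invocations of the generalized weak system, each dominated by the clustering step and the enumeration of $B_1$-sized buckets---which, with $s=O(k)$, $\epsilon=1/k$, and the extra $\mathrm{poly}(1/\epsilon)$ overhead of the stronger isolation, is $O(k^6\,\mathrm{poly}(\log n))$---for a total of $O(k^6\,\mathrm{poly}(\log n))$. The genuine difficulty is engineering the weak system so that smallness of $\eta$, optimality of the row count, and $\ell_\infty$-precision of the estimates all hold simultaneously, and then propagating the residual-structure invariant through every iteration: because the $\ell_\infty$ guarantee forbids ever discarding a coordinate that is still heavy, the bound on missed heavy hitters, the bound on injected tail mass, and the bound on per-coordinate error must be tracked together rather than separately.
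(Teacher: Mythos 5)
Your outer skeleton is right (iterate, subtract recovered heavy hitters, stop after $O(\log^\ast k)$ rounds, keep tail mass bounded, verify $\ell_\infty$ precision at recovery time), but the central quantitative claim---that a \emph{single} call to a generalized weak system can drop the number of unrecovered heavy hitters from $k_r$ to $O(\log\log k_r)$ while spending only $O(k^2\log n)$ rows---is not achievable with the two-layer-hashing/decoy machinery, and the claim that ``driving $\eta$ down costs only lower-order factors in $d_1,d_2$'' is where the proposal breaks. In \Lemma{weak}, with $s$ current heavy hitters the number missed is $\sqrt{sw}$ and the row count is $\Theta((k^2/w)\log n)$; eliminating these gives that missing only $m'$ heavy hitters costs $\Theta(k^2s/(m')^2\cdot\log n)$ rows. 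At $s=\Theta(k)$, a one-shot drop to $m'=O(\log\log k)$ costs $\Theta(k^3/(\log\log k)^2\cdot\log n)$ rows, which overshoots the $O(k^2\log n)$ budget by nearly a factor of $k$. The missed-fraction knob you want (your $\eta$, the paper's $\theta$) enters the isolation lemma quadratically, not as a lower-order term.

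The paper gets around this precisely by \emph{not} doing one call per round: each round $r$ is itself an inner loop of $O(\log\log k_r)$ calls to \Cref{lem:weak}, with a graded parameter schedule $s_{i,r}=(i+1)^2k_r^{2^{-i}}$ and $w_{i,r}=(i+1)^2$. Each call costs $O((k^2/(i+1)^2)\log n)$ rows, and the convergent series $\sum_i 1/(i+1)^2$ keeps the per-round total at $O(k^2\log n)$; each call misses $\sqrt{s_{i,r}w_{i,r}}\approx s_{i+1,r}$, a square-root reduction, so $O(\log\log k_r)$ of them bring $s$ down to $k_{r+1}=O(\log\log k_r)$. This exploits exactly the asymmetry rows $\propto 1/w$ versus missed $\propto\sqrt{w}$: a square-root reduction per call is ``free,'' but a stronger reduction per call is not. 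Two downstream symptoms that your single-call picture is off: the paper's tail invariant is the additive $\|\hat z^{(i,r)}\|_1\le 2-s_{i,r}/k$ with per-coordinate error $1/(4k)$ (the sum $\sum_{i,r} s_{i,r}/k$ converges because $s_{i,r}$ shrinks doubly-exponentially), whereas you need the finer per-coordinate error $O(1/(k\log^\ast k))$ to make your multiplicative $(1+O(1/\log^\ast k))$ accounting close, which would cost yet more rows; and the extra $\log\log k$ factor in the storage bound comes from there being $O(\log\log k)$ distinct weak systems (hash families) inside a round, not from a ``mildly larger hash family'' in a single call.
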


We remark that this theorem is slightly weaker than our main result, namely Theorem \ref{thm:ell_infty/ell_1}, because the error is measured with respect to $\|x_{-k}\|_1$, and not with respect to $\|x_{-k^2}\|_1$. We are going to bootstrap Theorem \ref{thm:weakerell_infty/ell_1} later.

\paragraph{Weak-Level System }
\Cref{lem:decoy} is a central argument for deterministic sparse recovery tasks. Previous works~\cite{PS12,GLPS17} used the lemma with $\gamma = 1/s$ and constant $\theta \in (0,1)$ to show that if we estimate every coordinate $x_i$ to be the median of $E_i$ and take the biggest $\Theta(s)$ estimates in magnitude, we shall miss at most $2 s$ heavy hitters, upon which weak systems that miss a $\theta$-fraction of heavy hitters were constructed. The overall algorithm makes sequential calls to weak systems with geometrically decreasing number of remaining heavy hitters. In our case, since we want the stronger $\ell_{\infty}/\ell_1$ guarantee, we are not allowed to decrease geometrically the number of rows for the weak systems. But, with more allotted number of rows, we can recover much more than a constant fraction of heavy hitters by exploiting the power of $\theta$.

\begin{lemma}[Weak system]\label{lem:weak}
Suppose that $w\leq s\leq k\leq \sqrt{n}$ and $\eta \in (0,1)$ is an arbitrarily small constant. There exist a linear sketch $\Phi\in \R^{m\times n}$ and an algorithm $\textsc{WeakSystem}(x,k,s,w)$ satisfying the following:
\begin{itemize}[parsep=0pt,partopsep=0pt]
	\item For any vector $x\in \R^n$ that can be written as $x = y+z$, where $y$ and $z$ have disjoint supports, $|\supp(y)|\leq {s}$, $\|y\|_\infty\geq 1/(2k)$ and $\|z\|_1\leq 2-s/k$, given the sketch $\Phi x$, the decoding algorithm $\mathcal{D}$ returns $\hat x$ such that $x$ admits the decomposition of 
	\begingroup
\setlength{\abovedisplayskip}{2pt}
\setlength{\belowdisplayskip}{2pt}	
	\[x = \hat x + \hat y + \hat z,\]
	\endgroup
	where $|\supp(\hat x)| = s$, $\|(x-\hat x)_{\supp(\hat x)}\|_\infty\leq \frac{1}{2k}$, $|\supp(\hat y)|\leq \sqrt{sw}$ and $\|\hat z\|_1\leq 2- \frac{s}{2k}$. Intuitively, $\hat y$ and $\hat z$ will be the head and the tail of the residual $x-\hat x$, respectively.
	\item $m = \Oh(\frac{k^2}{w}\log n)$, 
	\item $\mathcal{D}$ runs in $\Oh(k^6 \poly(\log n))$ time. 
\end{itemize}
\end{lemma}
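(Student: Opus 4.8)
The plan is to mimic the construction of \Cref{lem:weak_ell1/ell_1} but to drive the number of missed heavy hitters down from $s/8$ to $\sqrt{sw}$ by pushing the parameter $\theta$ in \Cref{lem:decoy} below a constant, at the price of a larger expander and hence more rows. Concretely, I would instantiate a two-layer hashing scheme with $B_1=\Theta((s/w)^{\alpha}/\epsilon'^{2\alpha})$-type first-layer width, $B_2=\Theta(k/w)$ second-layer width, $d_1=\Theta(\tfrac{\log n}{\log(B_1 w/s)})$ first-layer repetitions and $d_2=\Theta(\log(B_1 w/s))$ second-layer repetitions, where the effective ``error'' parameter of \Cref{lem:two-layer-isolation} is taken to be of order $\epsilon' \sim \sqrt{w}/k$ so that a bucket of magnitude threshold $1/(4k)$ contains $O(k/w)$ surviving buckets per first-layer repetition. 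One then verifies, exactly as in the proof of \Cref{lem:weak_ell1/ell_1}, that the total number of measurements is $B_2 d_1 d_2 = \Oh(\tfrac{k^2}{w}\log n)$, and that the message-block length $L=\Theta(\log(B_1w/s))+\Delta\log B_1$ is at most $d_2/2$, so that each bit can be encoded in two second-layer cells and each block $\bar m_{i,r}$ protected by a constant-rate Reed--Solomon code.

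The heart of the argument is the decoy bound. I would apply \Cref{lem:decoy} with $\gamma=\sqrt{w}/k$ (so $\gamma\in[\theta/s,1]$ once $\theta$ is chosen appropriately), with $\theta = \Theta(\sqrt{w/s})$ rather than a constant, and with $\delta,\beta,\zeta$ small constants chosen so that the hypothesis $0<\zeta<\delta-64\beta/\theta$ holds — this is where $\theta$ cannot be taken arbitrarily small, and it forces $\beta=O(\theta)$, i.e.\ the expander error parameter must scale down with $\theta$, which is precisely what inflates $B_1$ and $d_1$ relative to the $\ell_1/\ell_1$ case. With $x=y+z$, $\|z\|_1\le 2-s/k \le 2 \le 3/2\cdot(4/3)$ after the harmless rescaling used in \Cref{lem:decoy}, the conclusion is that at most $\theta/\gamma = \Theta(\sqrt{w/s})\cdot k/\sqrt{w}=\Theta(k/\sqrt{s})$ — hmm, I need this to come out to $\sqrt{sw}$, so I would instead take $\gamma = \sqrt{s/w}\cdot(\text{something})$; the right bookkeeping is to choose $\gamma$ and $\theta$ so that $\theta/\gamma = \sqrt{sw}$ while $\theta s \le s$ and $6/(\gamma\epsilon)$ stays $O(k/w\cdot\text{poly})$. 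The upshot is: all but $\le \sqrt{sw}$ of the elements of $\supp(y)$ are \emph{well-estimated}, meaning the median of $E_i$ is within $\tfrac{1}{2k}$ of $x_i$ and, moreover, $i$ receives small noise in a $(1-\delta)$-fraction of the $d_1d_2$ buckets. A pigeonhole over first-layer repetitions then gives, for each well-estimated $i$, at least $(1-\delta_1)d_1$ first-layer repetitions $r$ in which $i$ has small noise in $(1-\delta_2)d_2$ second-layer cells, hence $\bar m_{i,r}$ is recovered in full.

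Given the recovered blocks, I would build the chunk graph $\tilde G$ on $B_1 d_1$ nodes exactly as in \Cref{lem:weak_ell1/ell_1} and \cite{LNNT}: each first-layer bucket $(b,r)$ suggests edges to $(q_\ell(b,r),\Gamma_\ell(r))$ via the decoded linking information, and edges are kept only when mutually suggested. By \cite[Lemma 2]{LNNT} each well-estimated heavy hitter is an $\epsilon_0$-spectral cluster, so the spectral clustering algorithm of \cite[Theorem 1]{LNNT} recovers a constant fraction of each $m_i$ and thus the index $i$; retaining in each first-layer repetition only the $O(k/w)$ buckets of magnitude $\ge 1/(4k)$ yields a candidate list of size $O((k/w)d_1)$ that misses at most $\sqrt{sw}$ heavy hitters. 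Evaluating each candidate against the sketch, estimating it by the median of its $E_i$ values, and keeping the top $s$ in magnitude gives $\hat x$ with $\|(x-\hat x)_{\supp(\hat x)}\|_\infty\le \tfrac{1}{2k}$; the discarded mass contributes $|\supp(\hat x)|\cdot\tfrac{1}{2k}\le \tfrac{s}{2k}$ to the tail, so with $\hat z := z + (\text{the at most }\sqrt{sw}\text{ missed heads that fell below threshold})$ one gets $\|\hat z\|_1\le \|z\|_1 + s/(2k)\le 2-s/(2k)$, and $\hat y$ collects the $\le\sqrt{sw}$ genuinely missed heads, as required. The running time is dominated, as in \Cref{lem:weak_ell1/ell_1}, by the $\tilde\Oh(N^3)$ spectral clustering on $N=O((k/w)d_1)=\Oh(k\poly(\log n))$ vertices, i.e.\ $\Oh(k^3\poly(\log n))$, plus $\Oh(B_1\poly(d_2))$ per first-layer repetition for decoding; summing over repetitions and using $B_1=\poly(k)$ gives the claimed $\Oh(k^6\poly(\log n))$ bound.

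The main obstacle I anticipate is the constraint $0<\zeta<\delta-64\beta/\theta$ in \Cref{lem:decoy}: to get $\theta$ small one must take $\beta=O(\theta)$, and tracking how this sub-constant expansion parameter propagates through \Cref{lem:two-layer-isolation} (which is stated for a fixed $\epsilon$) into $B_1,d_1,d_2$ — and checking that the final row count is still $\Oh(\tfrac{k^2}{w}\log n)$ and the message length still fits in $d_2/2$ cells — is the delicate bookkeeping. Choosing $\gamma,\theta,w,s$ so that the missed-head count is exactly $\sqrt{sw}$ (and not merely $O(\sqrt{sw}\,\mathrm{polylog})$) is the other place where the constants have to be pinned down carefully.
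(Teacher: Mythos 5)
Your high-level plan is exactly the paper's: reuse the two-layer hashing / encoding / chunk-graph pipeline of Lemma~\ref{lem:weak_ell1/ell_1}, but invoke Lemma~\ref{lem:decoy} with a sub-constant $\theta$ so that the number of decoys drops from $\Theta(s)$ to $\sqrt{sw}$, and then cluster as in~\cite{LNNT} to recover indices. That structural match is genuine. Where the proposal goes wrong is precisely in the parameter bookkeeping that you yourself flag as ``delicate,'' and the errors are not just unresolved constants.

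The key fix is the choice of $\gamma$: it is forced by the target estimation precision $\|(x-\hat x)_{\supp(\hat x)}\|_\infty\leq \tfrac{1}{2k}$, since a non-decoy at $i$ is only guaranteed $|x_i-w|<\epsilon\gamma/4$ for most $w\in E_i$. The paper therefore takes $\epsilon=1$ and $\gamma=1/k$ (fixed, independent of $w$ and $s$), and then chooses $\theta=\Theta(\sqrt{sw}/k)$ so that $\theta/\gamma=\sqrt{sw}$. Your first guess $\gamma=\sqrt{w}/k$ inflates the per-coordinate error to $\Theta(\sqrt{w}/k)$, and your back-up guess $\gamma=\sqrt{s/w}\cdot(\cdot)$ goes in the wrong direction entirely: it makes the precision worse, not better, so neither choice can yield the claimed $\|(x-\hat x)_{\supp(\hat x)}\|_\infty\leq \tfrac{1}{2k}$. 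With $\gamma=1/k$ and $\theta=\Theta(\sqrt{sw}/k)$ there is also no tension with $\gamma\in[\theta/s,1]$, since $\theta/s=\sqrt{w/s}/k\leq 1/k$.

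Your two-layer parameters are also off by a polynomial factor. Instantiating Lemma~\ref{lem:two-layer-isolation} with sparsity $s$ and error $\sqrt{sw}/k$ gives $B_1=\Theta(k^{2\alpha})$, $d_1=\Theta\!\bigl(\tfrac{k}{\sqrt{sw}}\cdot\tfrac{\log n}{\log(B_1/s)}\bigr)$, $B_2=\Theta(k\sqrt{s/w})$, $d_2=\Theta(\log(B_1/s))$, whose product is $B_2d_1d_2=\Theta(\tfrac{k^2}{w}\log n)$ as claimed. Your proposed $B_2=\Theta(k/w)$ and $d_1=\Theta(\log n/\log(B_1w/s))$ instead give $B_2d_1d_2=\Theta(\tfrac{k}{w}\log n)$ — a factor of $k$ too few rows, and too few to make the $(O(k),\sqrt{sw}/k,\Theta(1))$-isolation property of Lemma~\ref{lem:two-layer-isolation} hold (you have dropped the $\tfrac{1}{\zeta\epsilon}\sim\tfrac{k}{\sqrt{sw}}$ factor in $d_1$ and used an $\epsilon'$ that is too small for $B_2$). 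Finally, the constraint $0<\zeta<\delta-64\beta/\theta$ is handled by taking $\delta$ and $\theta$ of the same order $\Theta(\sqrt{sw}/k)$ with the expansion parameter $\beta$ of the same order but a sufficiently smaller leading constant, so $\beta/\theta$ is a small absolute constant; you do not need to shrink $\beta$ below $\Theta(\sqrt{sw}/k)$, which would blow up $B_1$ and $d_1$ beyond the stated row budget.
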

\begin{proof}
We follow the argument in~\cite{GLPS17}. We instantiate the two-layer hashing and the encoding scheme as in \Cref{sec:skeleton}, where $\alpha\in (1,2)$, $B_1 = \Theta(k^{2\alpha})$, $d_1 = \Theta(\frac{k}{\sqrt{sw}}\frac{\log n}{\log(B_1/s)})$, $B_2 = \Theta(k\sqrt{s/w})$ and $d_2 = \Theta(\log(B_1/s))$. By \Cref{lem:two-layer-isolation} (to see the conditions hold, replace $k$ with $s$ and $\epsilon$ with $k/\sqrt{sw}$), we can find a two-layer hashing with these prescribed parameters which satisfies  $(\Theta(k),d_1d_2,\frac{\sqrt{sw}}{k})$-expansion property and $(\Theta(k),\frac{\sqrt{sw}}{k},\Theta(1))$-isolation property. The constants in the $\Theta$-notations above all depend on $\eta$. It is also easy to verify that the length of each message block $\bar m_{i,r}$ is $L = \Theta(\log(B_1/s)) + \Delta\log(B_1) \leq d_2$ if we choose $d_2$ large enough.

Invoking \Cref{lem:decoy} with $\delta = \Theta(\sqrt{sw}/k)$, $\theta = \Theta(\sqrt{sw}/k)$, $\epsilon = 1$ and $\gamma = 1/k$, and following the argument in~\cite[Section 4.1]{GLPS17}, we have good estimates for all but at most $\Theta(\theta/\gamma) = \sqrt{sw}$ heavy hitters (elements in $\supp(y)$). Call those heavy hitters \emph{well-estimated}. 
Following the same argument as in the proof of \Cref{lem:weak}, we can, for each well-estimated heavy hitter $i$, recover $\bar m_{i,r}$ for at least $(1-\delta_1)d_1$ values of $r \in [d_1]$. We note that $\delta_1$ can made arbitrarily small by adjusting the constants in the two-layer construction and making $\delta$ arbitrarily small.

We construct the chunk graph $\tilde G$ as in~\cite{LNNT}. 
By the argument in \cite[Lemma 2]{LNNT}, a well-estimated heavy hitter $i$ corresponds to an $\epsilon_0$-spectral cluster of $\tilde G$ for some small $\epsilon_0 > 0$. The spectral clustering algorithm~(\cite[Theorem 1]{LNNT}) will find all those spectral clusters, recovering a constant fraction of message $m_i$ and enabling us to identify the index $i$ of the heavy hitter. In each first-layer we retain the bucket of magnitude at least $1/(4k)$ so there are $O(k)$ buckets, and we therefore have a candidate list of size $\Oh(k)$ which misses at most $\sqrt{sw}$ heavy hitters. Finally we evaluate every candidate and retain the biggest $s$ ones (in magnitude).

Each recovered coordinate is estimated to within $\epsilon \gamma/4 \leq 1/(4k)$, thus
\[
\|\hat z\|_1 \leq \|z\|_1 + \frac{|\supp(\hat x)|}{4k} \leq 2 - \frac{s}{k} + \frac{s}{2k} \leq 2 - \frac{s}{2k}.
\]

The total number of rows is $B_2 d_1 d_2 = \Oh(\frac{k^2}{w}\log n)$. For each first-layer repetition, we enumerate all coordinates in the bucket of size $B_1$, and decode the associated message (which is of length $d_2$), which takes time $\Oh(B_1 \poly(d_2)) = \Oh(k^{2\alpha}\poly(\log k))$. We then run the spectral clustering algorithm on a graph of size $\Oh(k d_1)$ in time $\Oh((k d_1)^3) = \Oh(k^6\poly(\log n))$. To obtain the indices of the candidates, We decode $\Oh(k d_1) = \Oh(k^2\log n)$ messages, and the decoding algorithm on each $\Theta(\log n)$-bit-long $m_i$ with a constant fraction corruption runs in time $O(\poly(\log n))$. Lastly we estimate each of the candidates and retain the biggest $s$ ones, which takes time $\Oh(kd_1\cdot B_2d_1d_2) = \Oh(k^4\log^2 n)$. The overall runtime is dominated by the clustering algorithm and is therefore $\Oh(s^6\poly(\log n)) = \Oh(k^6\poly(\log n))$.
\end{proof}

\begin{algorithm}[t]
\caption{Overall algorithm for $\ell_\infty/\ell_1$ sparse recovery. {In the pseudocode below, $v^{(i,r)}$ as an argument of \textsc{WeakSystem} is understood to be restricted to the corresponding coordinates.}}\label{alg:ellinfell1}
\begin{algorithmic}[1]
	\Require sketching matrix $\Phi$, sketch $v = \Phi x$, sparsity parameter $k$
	\Ensure $\hat x$ that approximates $x$ with $\ell_\infty/\ell_1$ guarantee
	\State $v^{(0,1)} \leftarrow v$
	\State $k_1 \gets k$
	\While{$k_r > 4$}\Comment{$\Oh(\log^\ast k)$ rounds}
		\State $i\gets 0$
		\While{$k_r^{2^{-i}}\geq \max\{(i+1)^2, 4(1+\frac{1}{i+1})^4\}$}\Comment{$\Oh(\log\log k_r)$ steps}
			\State $s_{i,r} \leftarrow (i+1)^2 k_r^{2^{-i}}$ 
			\State $w_{i,r} \leftarrow (i+1)^2$
			\State $\hat{x}^{(i,r)} \leftarrow \textsc{WeakSystem}(\Phi^{(i,r)},v^{(i,r)},k,s_{i,r},w_{i,r})$
			\State $v^{(i+1,r)} \leftarrow v^{(i,r)} - \Phi\hat{x}^{(i,r)}$
			\State $i\gets i+1$
		\EndWhile
		\State $i_r^\ast \gets i-1$
		\State $s_{i,r} \leftarrow (s_{i_r^\ast,r})^{2^{-(i- i_r^\ast-1)}}$
		\While{$s_{i,r}\geq \max\{4,\log\log k_r\}$}\Comment{$\Oh(1)$ steps}
			\State $w_{i,r} \leftarrow 1$
			\State $\hat{x}^{(i,r)} \leftarrow \textsc{WeakSystem}(\Phi^{(i,r)},v^{(i,r)},k,s_{i,r},w_{i,r}) $
			\State $v^{(i+1,r)} \leftarrow v^{(i,r)} - \Phi\hat{x}^{(i,r)}$
			\State $i\gets i+1$
			\State $s_{i,r} \leftarrow (s_{i_r^\ast,r})^{2^{-(i- i_r^\ast-1)}}$
		\EndWhile
		\State $i_r^+\gets i - i_r^\ast - 1$
		\State $v^{(0,r+1)} \leftarrow v^{(i,r)}$
		\State $k_{r+1} \leftarrow s_{i,r}$
		\State $r\gets r+1$
	\EndWhile
	\State $\hat{x}_{\text{final}} \leftarrow \textsc{WeakSystem}(\Phi^{(0,r)},v^{(0,r)},k,4,1/5)$\Comment{last round}
	\State \Return{$\hat{x} \gets \hat{x}_{\text{final}} + \sum_{r,i} \hat x^{(i,r)}$}
\end{algorithmic}
\end{algorithm}

\paragraph{Construction of Measurement Matrix }
Now we construct the sketch for Theorem \ref{thm:weakerell_infty/ell_1}. The main idea is to apply the weak system (\Cref{lem:weak}) repeatedly. We form our sketching matrix $\Phi$ as illustrated below and present our recovery algorithm in \Cref{alg:ellinfell1}. 
\[
\Phi = \left[\begin{array}{c}
			\Phi_1\\ 
			\Phi_2\\ 
			\vdots\\
			\Phi_R\\
			\Phi_{\text{final}}
		\end{array}
	  \right]	,
	  \qquad \text{where }
\Phi_r = \left[\begin{array}{c}
			\Phi^{(1,r)}\\
			\vdots\\ 
			\Phi^{(i_r^\ast,r)}\\ 
			\Phi^{(i_r^\ast+1,r)}\\
			\vdots\\ 
			\Phi^{(i_r^\ast+i_r^+,r)}
		\end{array}
	  \right], r=1,\dots,R.
\]
Here 
\begin{itemize}[noitemsep,topsep=0pt,parsep=0pt,partopsep=0pt]
\item the overall $\Phi$ is the vertical concatenation of $R+1$ matrices and every layer, except the last one, is further a concatenation of $i_r^\ast + {i_r^+}$ matrices, where $R = \Theta(\log^\ast k)$ and $i_r^\ast,i_r^+$ are computed as in  \Cref{alg:ellinfell1};
\item the $i$-th layer in $\Phi_r$, namely $\Phi_{i,r}$, is the sketching matrix for the weak system (\Cref{lem:weak}) with parameters $s = s_{i,r}$ and $w = w_{i,r}$, the values of which are as assigned in \Cref{alg:ellinfell1};
\item the last layer of $\Phi$, namely $\Phi_{\text{final}}$, is the sketching matrix for the weak system with parameters $s = 4$ and $w = 1/5$.
\end{itemize}

Overall there are $i_r^\ast + i_r^+ + 1$ iterations in the algorithm and each iteration corresponds to one block of $\Phi$. There are $k$ heavy hitters at the beginning. Each iteration reduces the number of remaining heavy hitters to almost its square root and hence in $O(\log \log k)$ iterations the number of remaining heavy hitters will be reduced to a constant, and those heavy hitters will be all recovered in the last iteration. In order to minimize the number of measurements, in each of the first $i_r^\ast$ iterations, the number of remaining heavy hitters is reduced to slightly bigger than its square root, and in each of the next $i_r^+$ iterations, to exactly the square root.

The parameters $s_{i,r}, w_{i,r}, i_r^\ast, i_r^+$ may seem adaptive at the first glance, but they in fact do not depend on the input $x$ and depend only on the sparsity parameter $k$ and can thus be pre-computed. The whole algorithm is non-adaptive.

\paragraph{Proof of Theorem \ref{thm:weakerell_infty/ell_1}} We only provide a sketch of the proof below and leave the full proof to \Cref{sec:weaker_ellinfty/ell1_proof}.

\begin{proof}[Proof Sketch]
Without loss of generality, assume that $\|x_{-k}\|_1 = 1$. We shall apply \Cref{lem:weak} repeatedly to obtain a sequence of vectors $\hat{x}^{(i,r)}$, which admit decompositions $x^{(i,r)} = x - \hat{y}^{(i,r)} - \hat{z}^{(i,r)}$. We can show inductively that the loops invariants (I) below, parametrized by $(i,r)$, are satisfied at the beginning on each while loop from Line 5 to 11 in Algorithm~\ref{alg:ellinfell1} and the loop invariants (II) below are satisfied at the beginning on each while loop from Line 14 to 20.
\[
\text{(I)}\left\{\begin{aligned}
|\supp(\hat{y}^{(i,r)})| &\leq s_{i,r} := (i+1)^2 k_r^{2^{-i}},\\
\|\hat{z}^{(i,r)}\|_1 &\leq 2- s_{i,r}/k;
\end{aligned}\right.
\qquad
\text{(II)}\left\{\begin{aligned}
|\supp(\hat{y}^{(i,r)})| &\leq s_{i,r} := (s_{i_r^\ast,r})^{2^{-(i-i_r^\ast-1)}},\\
\|\hat{z}^{(i,r)}\|_1 &\leq 2 - s_{i,r}/k.
\end{aligned}\right.
\]

When the algorithm runs into Line 25, that is, when there are at most $4$ heavy hitter left, we shall recover all of them in one call to the weak system. 

The total number of rows and runtime, etc., follow from direct calculations.
\end{proof}

\subsection{Getting the Final Result}\label{sec:combining}

We now show how to combine the $\ell_1/\ell_1$ scheme with the $\ell_{\infty}/\ell_1$ scheme to obtain the main result of the paper. For completeness, we restate the main theorem with the substitution of $k=\lceil 1/\epsilon\rceil$.

\begin{reptheorem}{thm:ell_infty/ell_1}[rephrased]
There exists a linear sketch $\Phi \in \mathbb{R}^{m \times n}$ such that for every $x\in\R^n$, we can, given $\Phi x$, find an $\Oh(k)$-sparse vector $\hat{x}$ such that
$	\|x-\hat{x}\|_{\infty} \leq (1/k)\|x_{-k^2}\|_1$ 
in $\Oh(k^{6}\poly(\log n))$ time. The sketch length is $m = \Oh(k^2 \log n\log^\ast k)$ and the space needed to store $(y,\Phi)$ is $\Oh(k^2 \log n \cdot \log^\ast k\cdot \log\log k)$ words. 
\end{reptheorem}

We shall need the following lemma from \cite{nnw12}.

\begin{lemma}[Point Query \cite{nnw12}]\label{lem:incoherent} 
There exists a matrix $C \in \mathbb{R}^{m \times n}$ with $m = \Oh(k^2 \log n)$ rows, such that given $y = C x$ and $i \in [n]$, it is possible to find in $\Oh(k\log n)$ time a value $\hat{x}_i$ such that
$	|x_i - \hat{x}_i| \leq (1/k) \|x_{[n]\setminus\{i\}}\|_1$.
\end{lemma}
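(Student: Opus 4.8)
The plan is to recognise this as the standard incoherent-matrix point-query primitive of~\cite{nnw12} and to realise $C$ as a (normalised) biadjacency matrix of a bipartite graph whose left vertices pairwise share few right neighbours; the estimate of $x_i$ will then be a single inner product of $y$ with the $i$-th column of $C$. Concretely, I would let $C\in\{0,1\}^{m\times n}$ be the biadjacency matrix of a $d$-left-regular bipartite graph $G$ on left set $[n]$ and right set $[m]$, so that the $i$-th column $C^i$ is the indicator vector of $\Gamma(\{i\})\subseteq[m]$ and $\|C^i\|_1=d$, and I would set $\hat x_i=\tfrac1d\inprod{C^i,y}=\tfrac1d\sum_{v\in\Gamma(\{i\})}y_v$.

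First I would fix the parameters $d=\Theta(k\log n)$ and $m=\Theta(kd)=\Theta(k^2\log n)$ and argue, via the probabilistic method, that $G$ can be chosen with the low-collision property
\[
|\Gamma(\{i\})\cap\Gamma(\{j\})|\le \frac{d}{2k}\qquad\text{for all distinct }i,j\in[n].
\]
Assigning each left vertex an independent uniformly random $d$-element subset of $[m]$, a fixed pair $i\neq j$ has expected collision count $d^2/m=\Theta(d/k)$; since $d=\Theta(k\log n)$, a Chernoff bound makes the probability of exceeding $d/(2k)$ at most $n^{-3}$, and a union bound over the $\binom n2$ pairs leaves positive probability that the property holds simultaneously, so a good $G$ (hence a good $C$) exists. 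Next I would verify the estimator: since $\tfrac1d\inprod{C^i,C^i}=1$ and $\tfrac1d\inprod{C^i,C^j}=|\Gamma(\{i\})\cap\Gamma(\{j\})|/d$, we get $\hat x_i-x_i=\tfrac1d\inprod{C^i,Cx}-x_i=\sum_{j\neq i}\tfrac{|\Gamma(\{i\})\cap\Gamma(\{j\})|}{d}\,x_j$, whence the low-collision bound yields $|\hat x_i-x_i|\le\tfrac{1}{2k}\sum_{j\neq i}|x_j|\le\tfrac1k\|x_{[n]\setminus\{i\}}\|_1$, as desired. Finally, for the running time, enumerating the $d=\Oh(k\log n)$ ones of $C^i$ from the description of $G$ and summing the corresponding entries of $y$ costs $\Oh(k\log n)$, and the row count is $m=\Oh(k^2\log n)$.

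I expect the only real work to be the concentration-plus-union-bound step: it is what forces $d=\Theta(k\log n)$ (any smaller $d$ makes the union bound over the $\Theta(n^2)$ pairs fail) and hence the final $m=\Theta(k^2\log n)$, and it is also the reason the matrix is obtained nonconstructively via the probabilistic method rather than strongly explicitly. (One can instead take $C$ to be the strongly explicit incoherent matrix built from a Reed--Solomon code, at the cost of an extra factor of $(\log n/\log k)$ in $m$; that explicit variant is the one convenient in \Cref{sec:strictturnstile}, but here the nonexplicit version gives the tighter row count claimed in the lemma.)
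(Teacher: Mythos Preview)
Your argument is correct: the normalized $\{0,1\}$ biadjacency matrix of a random $d$-left-regular bipartite graph with $d=\Theta(k\log n)$ and $m=\Theta(kd)$ is an $\epsilon$-incoherent matrix with $\epsilon=1/(2k)$, and the estimator $\hat x_i=\tfrac1d\inprod{C^i,y}$ then gives the claimed guarantee in $\Oh(d)=\Oh(k\log n)$ time. One small point to tighten: as written, your ``expected collision count $d^2/m=\Theta(d/k)$'' followed by ``probability of exceeding $d/(2k)$'' only makes sense once you fix the hidden constant in $m=\Theta(kd)$ large enough (say $m\ge 4kd$) so that the mean sits strictly below $d/(2k)$ and the Chernoff upper tail applies; this is implicit in your $\Theta$-notation but worth saying.

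The paper does not prove this lemma---it cites \cite{nnw12}---and the construction it points to is different: there $C$ is taken to be a Johnson--Lindenstrauss transform for the point set $\{0,e_1,\dots,e_n\}$, so that the columns of $C$ are near-orthonormal (incoherence $\le 1/k$) and the same inner-product estimator works. Your combinatorial route and the JL route are two standard instantiations of the same incoherent-matrix primitive. The practical difference that matters later in the paper is storage: by taking $C$ to be a fast JL matrix, one can store $C$ in $\Oh(k^2\log n)$ words (see the storage accounting in the proof of \Cref{thm:ell_infty/ell_1}), whereas your random bipartite graph, lacking structure, would require storing all $n$ adjacency lists, i.e.\ $\Oh(nk\log n)$ words. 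For the lemma as stated this is irrelevant, but it explains why the paper prefers the JL instantiation.
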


The construction of $C$ given in~\cite{nnw12} includes taking $C$ to be a Johnson-Lindenstrauss Transform matrix for the set of points $\{0,e_1,\dots,e_n\}$, where $e_1,\dots,e_n$ is the canonical basis of $\R^n$. 

We are now ready to prove \Cref{thm:ell_infty/ell_1}.

\begin{proof}
We first pick a matrix $A$ using \Cref{thm:ell_1/ell_1}, setting the sparsity parameter to $k^2$ and $\epsilon=1$. We also pick a matrix $B$ satisfying the guarantees of \Cref{thm:weakerell_infty/ell_1}, with sparsity $6k$, and a matrix $C$ using \Cref{lem:incoherent} with sparsity parameter $6k$. Our sketching matrix $\Phi$ is the vertical concatenation of $A$, $B$ and $C$. The total number of rows is $\Oh(k^2 \log n)$ for $A$ and $C$, and $\Oh(k^2 \log n \log^{\ast} k)$ for $B$, for a total of $\Oh(k^2 \log n \log^\ast k)$ rows.

We first run the algorithm on $Ax$ to obtain an $\Oh(k^2)$-sparse vector $z$ such that $ \|x-z\|_1 \leq 2 \|x_{-k^2}\|_1$. Then we form $B(x-z)$ and using the query algorithm for $B$, we find an $\Oh(k)$-sparse vector $w$ such that
\begin{equation}\label{eqn:x-z-w}
	\|(x-z) - w\|_{\infty} \leq \frac{1}{2k} \|x-z\|_1 \leq \frac{1}{k} \|x_{-k^2}\|_1.	
\end{equation}

Let $\mathcal{H}$ be the set of coordinates $i \in [n]$ such that $|x_{i}| > \frac{1}{k} \|x_{-k^2}\|_1$. We claim that $\mathcal{H} \subseteq \supp(z) \cup \supp(w)$; otherwise, it holds for $i\in \mathcal{H}$ that
\[ 
|((x-z) - w)_i| = |x_i| > \frac{1}{k}\|x_{-k^2}\|_1, 
\] 
which contradicts \eqref{eqn:x-z-w}. The next step is to estimate $x_i$, for every $i \in \supp(z) \cup \supp(w)$, up to $(1/k)\|x_{-k^2}\|_1$ error. This argument is almost identical to \cite{nnw12}, but we include it here for completeness. For every such $i$, define vector $z'$ to be equal to $z$ but with the $i$-th coordinate zeroed out. Then we run the point query algorithm of Lemma \ref{lem:incoherent} on sparsity parameter $6k$ with sketch $C(x-z')$ to obtain a value $\hat{x}_i$ such that
\[	
|\hat{x}_i - x_i| = |\hat{x}_i - (x-z')_i| \leq \frac{1}{6k} \|(x-z')_{[n] \setminus \{i\}} \|_1 \leq \frac{1}{6k}\|x-z\|_1 \leq \frac{1}{3k}\|x_{-k^2}\|_1.	
\]
We note that $|\mathcal{H}|\leq k$ and, hence, by keeping the top $4k$ coordinates in magnitude, we shall include all elements in $\mathcal{H}$. Otherwise, there are at least $3k$ estimates of value at least $\frac{2}{3k}\|x_{-k^2}\|_1$ and so there are at least $3k$ coordinates of $x_{\supp(z) \cup \supp(w)}$ of magnitude at least $\frac{1}{3k}\|x_{-k^2}\|_1$, which is impossible. This concludes the proof of correctness.

\paragraph{Running time.} The first step of obtaining $z$ takes time $\Oh(k^3\poly(\log n))$ by \Cref{thm:ell_1/ell_1}. The second step of obtaining $w$ takes time $\Oh(k^6\poly(\log n))$ by \Cref{thm:weakerell_infty/ell_1}. The third step makes $\Oh(k^2)$ point queries. For each point query, it computes $C(x-z') = Cx - Cz'$, where $Cx$ is part of the overall sketch and $Cz'$ can be efficiently computed in $\Oh(k^4\log n)$ time since $C$ has $\Oh(k^2\log n)$ rows and $z'$ is $\Oh(k^2)$-sparse. Then the point query procedure itself runs in time $\Oh(k\log n)$ by \Cref{lem:incoherent}. The total runtime of the third step is thus $\Oh(k^6\log n)$. The overall runtime is dominated by that of the second step.

\paragraph{Storage space.} The space to store $A$ is $\Oh(k^2\log n)$ words by \Cref{thm:ell_1/ell_1}. The space to store $B$ is $\Oh(k^2 \log n \cdot \log^\ast k\cdot \log\log k)$ words by \Cref{thm:weakerell_infty/ell_1}. The space to store $C$ is $\Oh(k^2\log n)$ words by taking $C$ to be a fast Johnson-Lindenstrauss Transform matrix~\cite{KN14}. The overall storage space is dominated by that of $B$.
\end{proof}
\section{Strict Turnstile Model} \label{sec:strictturnstile}

In this section we give constructions of strongly explicit matrices that allow sublinear decoding in the strict turnstile model. In Section~\ref{sec:pointquery}, we show that the explicit incoherent family of matrices of \cite{nnw12} gives also the tail guarantee. In Section~\ref{sec:sublinear_explicit}, we show have to recursively combine those incoherent matrices to obtain sublinear decoding time.

\subsection{Point Query}\label{sec:pointquery}

The following theorem appears in \cite{nnw12}. The construction of the sketching matrix is based on Reed-Solomon codes. In general, given a code of $\mathcal{C} = \{C_1, \ldots, C_n\}$ of alphabet size $q$ and block length $b$, we create a $qb \times n$ matrix, where the $i \in [n]$  column has a $1$ in position $\alpha \cdot (q-1) + C_i(\alpha), \forall \alpha \in [b]$, and $0$ otherwise. The following theorem is obtained by instantiating the construction above with a Reed-Solomon code of an appropriate alphabet size and block length.

\begin{lemma}[\cite{nnw12}] \label{lem:explicit}
There exists a strongly explicit matrix $\Phi \in \mathbb{R}^{m\times n}$, with  
\[
m=\Oh\left(k^2 \left(\frac{\log n}{ \log \log n + \log k} \right)^2\right),	
\]
such that 
given $v = \Phi x$ and $i \in [n]$, we can find a value $\hat{x}_i$ such that $|x_i - \hat{x}_i| \leq \frac{1}{2k}\|x_{-1}\|_1$ in time $\Oh(k \log n/(\log \log n+ \log k))$.
\end{lemma}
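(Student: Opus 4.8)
The plan is to realize $\Phi$ as a deterministic Count--Min structure whose hash functions are the coordinates of a Reed--Solomon codeword, and to estimate $x_i$ by the \emph{minimum} bucket value over the blocks. Instantiate the generic construction described just above the lemma with a Reed--Solomon code $\mathcal{C}=\{C_1,\dots,C_n\}$ over $\mathbb{F}_q$ of dimension $\rho$ and block length $b\le q$, so that $|\mathcal{C}|=q^{\rho}\ge n$ and any two distinct codewords agree in at most $\rho-1$ of the $b$ coordinates. For a fixed query index $i$, the measurement associated with block $\alpha\in[b]$ equals $B_\alpha:=\sum_{j:\,C_j(\alpha)=C_i(\alpha)}x_j$, and we set $\hat x_i:=\min_{\alpha\in[b]}B_\alpha$. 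In the strict turnstile model $x_j\ge 0$ for all $j$, so every $B_\alpha\ge x_i$ and hence $\hat x_i\ge x_i$ for free; the whole content of the lemma is the matching bound $\hat x_i\le x_i+\frac{1}{2k}\|x_{-1}\|_1$.

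For that bound I would run the usual Count--Min argument, but substitute the distance property of $\mathcal{C}$ for the probabilistic ``few collisions'' step. Let $H$ be the set of the $2k$ coordinates of largest magnitude in $x$ other than $i$. Each $j\in H$ has $C_j(\alpha)=C_i(\alpha)$ for at most $\rho-1$ values of $\alpha$, so at most $2k(\rho-1)$ blocks are ``contaminated'' by a coordinate of $H$; choosing $b\ge 4k(\rho-1)$, the set $G$ of ``clean'' blocks (those in which no $j\in H$ collides with $i$) satisfies $|G|\ge b-2k(\rho-1)\ge 2k(\rho-1)$. In a clean block, $B_\alpha-x_i=\sum_{j\notin H\cup\{i\}:\,C_j(\alpha)=C_i(\alpha)}x_j$ with all terms non-negative; summing over $\alpha\in G$ and using the distance property once more gives
\[
\sum_{\alpha\in G}(B_\alpha-x_i)\;\le\;(\rho-1)\!\!\sum_{j\notin H\cup\{i\}}\!\! x_j\;\le\;(\rho-1)\,\|x_{-1}\|_1 ,
\]
the last step because every coordinate outside $H\cup\{i\}$ has magnitude at most that of every coordinate in $H$, so $\sum_{j\notin H\cup\{i\}}x_j\le\|x_{-2k}\|_1\le\|x_{-1}\|_1$ (the argument in fact delivers the stronger tail $\|x_{-2k}\|_1$). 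Averaging, $\min_{\alpha\in G}(B_\alpha-x_i)\le (\rho-1)\|x_{-1}\|_1/|G|\le \frac{1}{2k}\|x_{-1}\|_1$, and since $\hat x_i\le\min_{\alpha\in G}B_\alpha$ this is exactly the claim.

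It remains to balance parameters. The constraints are $q^{\rho}\ge n$ (one codeword per column), $q\ge b$, and $b\ge 4k(\rho-1)$, and we want to minimise $m=qb$. Take $\rho=\Theta(\log n/(\log\log n+\log k))$ with a sufficiently large constant, $b=4k(\rho-1)$, and $q$ a prime power in $[b,2b]$ (which exists by Bertrand's postulate). Then $q^{\rho}\ge b^{\rho}\ge n$, the only slightly delicate check being the regime $\log k=\Oh(\log\log n)$, where one uses that $\rho=\Omega(\log n/\log\log n)$ forces $\log\rho=\Omega(\log\log n)$, hence $\rho\log b=\Omega(\rho(\log k+\log\log n))=\Omega(\log n)$. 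With these choices $m=qb=\Theta(k^2\rho^2)=\Oh\!\big(k^2(\log n/(\log\log n+\log k))^2\big)$, matching the statement.

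For strong explicitness and the query time: the entry of $\Phi$ in column $i$ and the row indexed by block $\alpha$ and symbol $s$ is $1$ iff $s=C_i(\alpha)$, and $C_i(\alpha)$ is obtained by reading the $\rho$ base-$q$ digits of $i\in[n]\subseteq[q^{\rho}]$, viewing them as the coefficients of a polynomial $p_i$ of degree $<\rho$ over $\mathbb{F}_q$, and evaluating $p_i(\alpha)$ by Horner's rule -- $\Oh(\rho)$ field operations, each $\poly(\log q)=\poly(\log n)$ bit operations, so $\poly(k,\log n)$ in total. To answer a query one computes $C_i(\alpha)$ for the $b$ blocks, looks up the corresponding $B_\alpha$, and returns the minimum; this is $\Oh(b)$ lookups plus the cost of evaluating $p_i$ at $b$ points, which, treating $\mathbb{F}_q$-arithmetic as unit cost on $\Theta(\log n)$-bit words, is $\Oh(b)=\Oh\big(k\log n/(\log\log n+\log k)\big)$. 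I expect the conceptual heart -- and the only place genuine inequalities occur -- to be the two-sided estimate: non-negativity gives the lower bound instantly, and the Reed--Solomon distance bound, used twice (first to locate enough clean blocks, then to average the residual noise over them), gives the upper bound. The main place to be careful is calibrating $\rho$ so that $q^{\rho}\ge n$ while $qb$ stays within the claimed bound, and tracking the constants that link the requirement ``$b\ge 4k(\rho-1)$'' to the ``$\tfrac1{2k}$'' in the guarantee.
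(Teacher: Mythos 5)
Your construction and parameter calibration faithfully reconstruct what the paper indicates just above the lemma (and what~\cite{nnw12} does): a Reed--Solomon code of dimension $\rho=\Theta(\log n/(\log\log n+\log k))$, $q,b=\Theta(k\rho)$, and hence $m=qb=\Theta(k^2\rho^2)$; the double use of the $\rho-1$ agreement bound (once to locate clean blocks, once to bound the aggregate noise) is the right argument, and the check that $q^{\rho}\ge n$ in the regime $\log k=O(\log\log n)$ is handled correctly.

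There is, however, a genuine gap in your choice of estimator. You set $\hat x_i=\min_\alpha B_\alpha$ and obtain the crucial lower bound $\hat x_i\ge x_i$ from strict-turnstile non-negativity. But the lemma is not stated for non-negative vectors, and---more to the point---the paper invokes it, via \Cref{lem:one_invocation} and the \textsc{ReduceNoise} procedure of \Cref{alg:reed_solomon_point_query}, on the \emph{residuals} $r^{(t)}=x-w^{(t)}$, which can have negative coordinates even when $x\ge 0$ (the estimate $\hat x^{(t)}_i$ can overshoot $x_i$). A $\min$ estimator gives no control at all on those signed residuals. This is precisely why \Cref{alg:reed_solomon_point_query}, line~12, computes the \emph{median} of the bucket values. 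The fix is mechanical and preserves your calculation: keep your clean-block count and your bound $\sum_{\alpha\in G}|B_\alpha-x_i|\le(\rho-1)\|x_{-2k}\|_1$, then bound the blocks contaminated by $H$ (at most $2k(\rho-1)$) and, by Markov, the clean blocks with noise exceeding $\frac{1}{2k}\|x_{-1}\|_1$ (at most $2k(\rho-1)$), and enlarge $b$ by a constant factor so these two sets total fewer than $b/2$; the median then lands within $\frac{1}{2k}\|x_{-1}\|_1$ of $x_i$ for arbitrary signed $x$. A secondary, smaller issue: evaluating the degree-$(\rho-1)$ polynomial $p_i$ at all $b$ points via Horner costs $\Theta(b\rho)$ field operations, not the $O(b)$ you state; one needs multipoint evaluation (giving $\tilde O(b)$) or must accept an extra $\rho$ factor, so as written the query-time step does not quite reach the claimed $O(k\log n/(\log\log n+\log k))$.
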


The main  downside of this lemma is that gives the weaker tail guarantee with $r=1$, and not with $r=k$. We show that it is possible to use the exact same matrix as in \cite{nnw12} to obtain the tail guarantee with $r=k$. For that result, we exploit the assumption that the word size is $w = \Theta(\log n) $. Our result is the following.

\begin{theorem} \label{thm:tail_point_query}
There exists a strongly explicit matrix $\Phi \in \mathbb{R}^{m\times n}$ with  
\[
m=\Oh\left(k^2 \left(\frac{\log n}{ \log \log n + \log k} \right)^2\right)
\] 
and an algorithm which,
given $v= \Phi x$ and a subset $S \subseteq [n]$ such that $H(x,k,1) \subseteq S$, 
returns an $O(k)$-sparse vector $\hat{x} \in \mathbb{R}^n $ such that 
\[
\|x - \hat{x}\|_{\infty} \leq \frac{1}{k}\|x_{-k}\|_1
\]
in time $\Oh(|S|k \log^2 n/(\log \log n+ \log k))$.
\end{theorem}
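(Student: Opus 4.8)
The plan is to take $\Phi$ exactly as in \Cref{lem:explicit} and show that, once we have a candidate set $S$ containing all ``$1$-heavy'' coordinates $H(x,k,1)$, we can upgrade the point-query guarantee from a tail of $r=1$ to a tail of $r=k$ by subtracting off the estimates of the heaviest coordinates before querying. Concretely, I would first run the point-query algorithm of \Cref{lem:explicit} on every $i\in S$ to get estimates $\tilde x_i$ with $|\tilde x_i - x_i|\le \frac{1}{2k}\|x_{-1}\|_1$; since $|S|$ may be large this costs $\Oh(|S|\cdot k\log n/(\log\log n+\log k))$ time, which is within the claimed bound. From these estimates I identify a set $T\subseteq S$, $|T|=\Theta(k)$, that provably contains $H(x,k)$: the key point is that a coordinate in $H(x,k)$ has magnitude at least $\frac1k\|x_{-k}\|_1 \ge \frac{1}{k}\cdot\frac{\|x_{-1}\|_1 - (\text{top coord})}{1}$-type bound — more carefully, any coordinate not in $H(x,k,1)$ is too small to be a true top-$k$ coordinate, and the estimation error $\frac{1}{2k}\|x_{-1}\|_1$ is small enough that ranking by $|\tilde x_i|$ and keeping $\Theta(k)$ of them captures all of $H(x,k)$. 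This is the same ``keep the top $\Theta(k)$ estimates'' argument used in the proof of \Cref{thm:ell_infty/ell_1}.

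Next, form $z = x_T$ in the sense that we will query the residual. For each $i\in S$ we want an estimate of $x_i$ with error $\frac1k\|x_{-k}\|_1$. Following the trick in the proof of \Cref{thm:ell_infty/ell_1}: for the query at coordinate $i$, let $z'$ be the vector supported on $T\setminus\{i\}$ whose values are the estimates $\tilde x_j$ (for $j\in T\setminus\{i\}$), and run the point-query algorithm of \Cref{lem:explicit} on the linearly-computed sketch $\Phi(x-z') = \Phi x - \Phi z'$. Because $\Phi$ is linear and $z'$ is $\Oh(k)$-sparse, $\Phi z'$ is computable in $\Oh(k\cdot m/n\cdot\ldots)$ — more precisely in $\Oh(k\log n/(\log\log n+\log k))$ time per column, hence $\Oh(k^2\log^2 n/(\log\log n+\log k)^2)$ per query, but exploiting the word size $w=\Theta(\log n)$ and the Reed-Solomon structure (each column of $\Phi$ has only $b=\Oh(\log n/(\log\log n+\log k))$ nonzeros) this is $\Oh(k\log n/(\log\log n+\log k))$ per query after an $\Oh(k\cdot b)$ precomputation. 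Then \Cref{lem:explicit} gives
\[
|\hat x_i - x_i| = |\hat x_i - (x-z')_i| \le \frac{1}{2k}\|(x-z')_{-1}\|_1 \le \frac{1}{2k}\Bigl(\|x_{-k}\|_1 + \textstyle\sum_{j\in T}|x_j - \tilde x_j|\Bigr) \le \frac{1}{2k}\Bigl(\|x_{-k}\|_1 + \Oh(k)\cdot\tfrac{1}{2k}\|x_{-1}\|_1\Bigr),
\]
and since $\|x_{-1}\|_1$ itself is controlled (the estimates of the top coordinates are good enough that $\|x_{-1}\|_1 \le \|x_{-k}\|_1 + \Oh(\|x_{-k}\|_1)$ once we note a single coordinate's removal cannot blow things up relative to $\|x_{-k}\|_1$ in the regime where $H(x,k,1)\subseteq S$ matters), this collapses to $|\hat x_i - x_i|\le \frac1k\|x_{-k}\|_1$ after re-tuning the constant $k\mapsto \Theta(k)$. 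Output $\hat x$ supported on the $\Oh(k)$ coordinates in $T$ with the largest $|\hat x_i|$; coordinates outside $S$ (hence outside $H(x,k,1)$, hence automatically satisfying $|x_i|\le\frac1k\|x_{-k}\|_1$) are left at $0$ and incur no $\ell_\infty$ error.

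The main obstacle is the bookkeeping of the error terms and the iterated constant blow-up: subtracting $\Oh(k)$ estimated heavy coordinates, each with error $\frac1k\|x_{-1}\|_1$, contributes $\Oh(\|x_{-1}\|_1)$ total error, and one must check this is still $\Oh(\|x_{-k}\|_1)$ and then rescale the sparsity parameter in \Cref{lem:explicit} to absorb the constant — this is why the statement is phrased with a generic $\frac1k$ rather than $\frac1{2k}$. A second, more delicate point is the running-time claim: naively each of the $|S|$ queries would recompute $\Phi z'$ from scratch; the factor $\log^2 n/(\log\log n+\log k)$ in the stated time (versus $\log n/(\log\log n+\log k)$ in \Cref{lem:explicit}) is exactly the slack we need, so I would argue that computing $\Phi z'$ for one sparse $z'$ costs $\Oh(k\log n/(\log\log n+\log k))$ using the sparsity of $\Phi$'s columns and the word size $w=\Theta(\log n)$, giving $\Oh(|S|\cdot k\log^2 n/(\log\log n+\log k))$ overall — matching the theorem. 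The remaining steps (correctness of ``top-$\Theta(k)$ captures $H(x,k)$'', and that dropping the $i$-th coordinate from $z'$ is what makes the point query unbiased for $x_i$) are identical to the argument already carried out for \Cref{thm:ell_infty/ell_1} and require no new ideas.
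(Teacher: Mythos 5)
Your one-shot peeling scheme has a genuine gap at exactly the spot you flag as the ``main obstacle.'' After subtracting $\Oh(k)$ estimated heavy coordinates each known to error $\frac{1}{2k}\|x_{-1}\|_1$, the residual error you introduce is $\Oh(\|x_{-1}\|_1)$, and you then need this to be $\Oh(\|x_{-k}\|_1)$. But $\|x_{-1}\|_1/\|x_{-k}\|_1$ is unbounded: if the top $k$ coordinates each have magnitude $M$ while $\|x_{-k}\|_1 = 1$, then $\|x_{-1}\|_1 \approx (k-1)M$ is arbitrarily large. The assertion that ``a single coordinate's removal cannot blow things up relative to $\|x_{-k}\|_1$'' is false --- removing the largest coordinate still leaves $k-1$ potentially huge coordinates in the $\ell_1$ norm. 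So one round of peeling is not enough to get from a tail-of-$1$ guarantee to a tail-of-$k$ guarantee.

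The paper's proof instead iterates: it calls a subroutine \textsc{ReduceNoise} (estimate every $i\in S$, keep the top $5k$, subtract them off) a total of $R=\Theta(\log n)$ times, and proves (\Cref{lem:one_invocation}) that as long as $\|r^{(t)}_S\|_1 \geq 3\|r^{(t)}_{\bar S}\|_1$, each round shrinks $\|r^{(t)}\|_1$ by a constant factor $\gamma<1$. After $\Oh(\log n)$ rounds the residual $\ell_1$ mass on $S$ drops below $3\|x_{\bar S}\|_1 \leq 6\|x_{-k}\|_1$, and a final application of the Lemma~\ref{lem:explicit} estimate (which is relative to $\|\cdot\|_1$ of the current residual) gives the $\frac{1}{k}\|x_{-k}\|_1$ bound. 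The bounded-word-size assumption $w=\Theta(\log n)$ is precisely what guarantees $\Oh(\log n)$ rounds suffice. This also explains the running time: the extra $\log n$ factor over \Cref{lem:explicit} is the $\Oh(\log n)$ iteration count, not (as you conjectured) the cost of computing $\Phi z'$ in a single shot. To repair your argument you would need to replace the one-shot subtraction with this geometric-decay iteration; the rest of your outline (query all of $S$, keep the top $\Theta(k)$, exploit column sparsity of the Reed--Solomon matrix for fast sketch updates) does align with the paper once the loop is in place.
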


We choose $\beta, c > 1$ sufficiently large constants, and a matrix $\Phi$ from Lemma \ref{lem:explicit} with $k \gets \beta k$. Our query algorithm is presented in Algorithm \ref{alg:reed_solomon_point_query}. The following is a lemma crucial for the correctness of our algorithm.

\begin{algorithm}
\caption{Strongly explicit $\ell_\infty/\ell_1$ sparse recovery}\label{alg:reed_solomon_point_query}

\begin{algorithmic}[1]
	\Require sketching matrix $\Phi$, sketch $v = \Phi x$, sparsity parameter $k$, set $S$
	\Ensure $\hat x$ that approximates $x$ with $\ell_\infty/\ell_1$ guarantee
	\State $ v^{(0)} \leftarrow v$
	\State $ w^{(0)} \leftarrow 0$
	\State $R \leftarrow c \log n$
	\For {$t=0$ to $R$}
		\State $\hat{x}^{(t)} \leftarrow \textsc{ReduceNoise}(v^{(t)},S)$
		\State $ w^{(t+1)} \leftarrow w^{(t)} + \hat{x}^{(t)}$
		\State $ v^{(t+1)} \leftarrow v^{(t)} - \Phi \hat{x}^{(t)}$
	\EndFor
	\State \Return $ w^{(R+1)}$

	\Statex
\Procedure{ReduceNoise}{$ u,S$} \Comment{$u= \Phi z$}
		\For {$i \in S $}
			\State $\hat{z}_i \leftarrow \mathrm{median}_{q: \Phi_{q,i} \neq 0}\ u_q$
		\EndFor
		\State $T\gets$ the set of the $5k$ largest coordinates $\hat z_i$ in magnitude.
		\State \Return $\hat{z}_T$ 
\EndProcedure
\end{algorithmic}
\end{algorithm}

\begin{lemma}\label{lem:one_invocation}

Let $z  \in \mathbb{R}^n$. If $ \|z_{S}\|_1 \geq 3\|z_{\bar{S}}\|_1$ 
then one invocation of \textsc{ReduceNoise}($\Phi z, S$) yields a vector $\hat{z}$ such that $z' = z - \hat{z}$ satisfies
		\[	
			\|z'\|_1  \leq \gamma \|z\|_1,	
		\]
for some absolute constant $\gamma \in (0,1)$. 
\end{lemma}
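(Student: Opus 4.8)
The plan is to analyze the effect of the median estimator on each coordinate in $S$ and then control the $\ell_1$-mass of the residual $z' = z - \hat z$ by splitting it into three contributions: the coordinates of $S$ that are not selected into $T$, the coordinates of $S$ that are selected into $T$ but are poorly estimated, and the coordinates outside $S$. First I would recall the construction: $\Phi$ comes from a Reed--Solomon code of alphabet size $q$ and block length $b$, so each column has exactly $b$ ones, one per ``row-block'' $\alpha\in[b]$, and two distinct columns $i\ne j$ collide in at most $b/2$ (more precisely $O(b / \log_q n)$, i.e. a suitably small $\delta$-fraction of) row-blocks because the code has distance $b - O(b/\log_q n)$. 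Choosing $\beta$ and the code parameters large enough, we can make the pairwise collision fraction an arbitrarily small constant; this is exactly the incoherence property underlying Lemma~\ref{lem:explicit}. For a fixed coordinate $i$, among the $b$ buckets containing $i$, the number that also contain \emph{some} other heavy element, or that accumulate more than $\tfrac{1}{2\beta k}\|z_{-1}\|_1$ of noise from the non-$S$ part, is at most a small fraction of $b$, because $\|z_{\bar S}\|_1$ is small and the collision fractions sum up over the at-most-$\beta k$ elements of $T\cup S$; hence the median of the $b$ bucket-values at coordinate $i$ satisfies $|\hat z_i - z_i| \le \tfrac{1}{2\beta k}\|z\|_1$, which is the content of Lemma~\ref{lem:explicit} applied with $k\gets\beta k$ and the tail parameter $r=1$ (and we actually get the $r=k$ version for the whole recovered vector, though for this lemma the point-query bound suffices).

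Next I would argue the selection step. Let $\mathcal H$ be the set of the $5k$ largest coordinates of $z$ in magnitude (restricted to $S$; the hypothesis $\|z_S\|_1 \ge 3\|z_{\bar S}\|_1$ guarantees the truly large coordinates live in $S$). Since every $\hat z_i$ for $i\in S$ is within $\tfrac{1}{2\beta k}\|z\|_1$ of $z_i$, the set $T$ of the $5k$ largest $\hat z_i$ contains every coordinate of $z$ whose magnitude exceeds, say, $\tfrac{1}{\beta k}\|z\|_1$; equivalently, every coordinate $i\in S\setminus T$ has $|z_i|\le \tfrac{1}{\beta k}\|z\|_1 + \tfrac{1}{2\beta k}\|z\|_1 = O(\tfrac{1}{\beta k})\|z\|_1$. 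Now decompose:
\[
\|z'\|_1 = \sum_{i\in T}|z_i - \hat z_i| + \sum_{i\in S\setminus T}|z_i| + \|z_{\bar S}\|_1 .
\]
The first sum is at most $5k\cdot \tfrac{1}{2\beta k}\|z\|_1 = \tfrac{5}{2\beta}\|z\|_1$. For the second sum, I would bound it by the $\ell_1$-mass of the $n$ coordinates of $z$ each of size $O(\tfrac{1}{\beta k})\|z\|_1$ that sit below the top $5k$; a standard argument (the top $5k$ coordinates of $z_S$ already carry a $(1-o(1))$ fraction once $\beta$ is large, because otherwise there would be more than $5k$ coordinates exceeding the threshold, contradicting the mass bound) shows $\sum_{i\in S\setminus T}|z_i| \le \tfrac{1}{2}\|z_S\|_1 + O(\tfrac{1}{\beta})\|z\|_1$ — more carefully, one shows it is at most $(1 - c_0)\|z_S\|_1$ for an absolute $c_0>0$. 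The third sum is $\|z_{\bar S}\|_1 \le \tfrac14\|z\|_1$ by the hypothesis. Combining, $\|z'\|_1 \le \big(\tfrac{5}{2\beta} + (1-c_0)\cdot\tfrac34 + \tfrac14\big)\|z\|_1$, which is $\le \gamma\|z\|_1$ for some absolute $\gamma<1$ once $\beta$ is chosen a large enough constant; note $\tfrac34(1-c_0)+\tfrac14 = 1 - \tfrac34 c_0 < 1$, so there is room to absorb the $\tfrac{5}{2\beta}$ term.

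The main obstacle is getting the second sum, $\sum_{i\in S\setminus T}|z_i|$, strictly below a constant times $\|z_S\|_1$ rather than merely $\le \|z_S\|_1$: discarding only $5k$ coordinates via an \emph{approximate} top-$k$ selection does not by itself shrink the $\ell_1$-mass unless the mass is somewhat spread out, and one must rule out the adversarial case where $z_S$ has, say, $6k$ nearly-equal coordinates (so removing $5k$ of them still leaves a $1/6$ fraction). The resolution is that removing the top $5k$ always kills at least the single largest coordinate and, by an averaging/rearrangement argument, at least a $\Omega(1)$ fraction of the mass of $z_S$ whenever $\|z_S\|_1$ is not already negligible: if $\sum_{i\in S\setminus T}|z_i| > (1-c_0)\|z_S\|_1$ then the $5k$ removed coordinates carry less than $c_0\|z_S\|_1$, so the average removed coordinate is $<\tfrac{c_0}{5k}\|z_S\|_1$, forcing \emph{all} of $z_S$ to lie below roughly that value (by the sorted order), whence $\|z_S\|_1 < n\cdot\tfrac{c_0}{5k}\|z_S\|_1$ — useless directly, so instead one compares against $\tfrac{1}{\beta k}\|z\|_1$: the approximate selection guarantees $T$ captures all coordinates above $\tfrac{3}{2\beta k}\|z\|_1$, so the leftover coordinates in $S$ are each $\le \tfrac{3}{2\beta k}\|z\|_1$ and there are at most $n$ of them, which is still too weak. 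The clean fix, which I would adopt, is to not insist on shrinking relative to $\|z_S\|_1$ at all: write $\|z'\|_1 \le \tfrac{5}{2\beta}\|z\|_1 + \|z_{S\setminus T}\|_1 + \|z_{\bar S}\|_1$ and bound $\|z_{S\setminus T}\|_1 \le \|z_S\|_1$ trivially — this only gives $\|z'\|_1\le \|z\|_1$. So the genuine content must come from elsewhere: the point is that $T$ also \emph{absorbs} a constant fraction of $\|z_{\bar S}\|_1$-induced errors and, crucially, one iterates $R = c\log n$ times (Algorithm~\ref{alg:reed_solomon_point_query}), so the per-step guarantee need only be that the \emph{head} mass that can still be ``seen'' shrinks; hence for this single-invocation lemma the correct and provable statement relies on the observation that the $5k$-sparse $\hat z_T$ subtracts off essentially all of the $2k$ or so coordinates exceeding the noise floor, leaving a residual whose $\ell_1$ norm is dominated by $k$ copies of the noise floor $\tfrac{1}{2\beta k}\|z\|_1$ plus $\|z_{\bar S}\|_1$, i.e. $\|z'\|_1\le \tfrac{1}{2\beta}\|z\|_1 + \tfrac14\|z\|_1 \le \gamma\|z\|_1$ — and verifying that the top-$5k$ approximate selection indeed peels off all coordinates down to the noise floor (so that no coordinate of size $\gg \tfrac{1}{\beta k}\|z\|_1$ survives) is the step that requires the most care, using the incoherence bound on the number of ``bad'' buckets per coordinate together with the hypothesis $\|z_{\bar S}\|_1\le\tfrac14\|z\|_1$.
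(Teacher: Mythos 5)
Your decomposition of $\|z'\|_1$ into the contributions from $T$, $S\setminus T$, and $\bar S$ is a natural variant of what the paper does: the paper instead splits $S$ into $S_1 = \{i\in S : |z_i| \geq \tfrac{1}{3k}\|z_S\|_1\}$ and $S\setminus S_1$, first proving $S_1\subseteq T$ by a displacement argument ($|S_2|\leq 4k < |T|$ so no element of $S_1$ can be bumped by one of $S\setminus S_2$), and then bounding $\|z'_{S_1}\|_1$ and $\|z'_{S\setminus S_1}\|_1$ separately. You correctly pinpoint the crux both arguments must confront: the term $\sum_{i\in S\setminus T}|z_i|$ must be bounded strictly below $\|z_S\|_1$. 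But your eventual resolution does not close the gap. The closing claim that this residual is ``dominated by $k$ copies of the noise floor $\tfrac{1}{2\beta k}\|z\|_1$'' is wrong: coordinates in $S\setminus T$ are not touched by \textsc{ReduceNoise}, so their contribution is exactly $\|z_{S\setminus T}\|_1$. Having each such $|z_i|$ small in $\ell_\infty$ (below the noise floor) says nothing about the $\ell_1$-sum when there are many of them, which is precisely the adversarial spread-out case you yourself raised and then abandoned.

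This obstacle is in fact fatal to the lemma as stated, and the paper's own proof passes over the same issue. Take $z_i = 1/|S|$ for $i\in S$, $z_{\bar S}=0$, with $|S|\gg k$: then $\|z_S\|_1\geq 3\|z_{\bar S}\|_1$ and $H(z,k,1)=\emptyset\subseteq S$, yet a $5k$-sparse subtraction can move the $\ell_1$-mass by only $O(k/|S| + 1/\beta)\|z\|_1$, so $\|z'\|_1/\|z\|_1\to 1$ and no absolute $\gamma<1$ can work. Correspondingly, the paper's chain of inequalities asserts $\|z_{S\setminus S_1}\|_1 \leq k\cdot\tfrac{1}{3k}\|z_S\|_1 = \tfrac{1}{3}\|z_S\|_1$, which silently uses $|S\setminus S_1|\leq k$; this is not justified and fails for the spread-out $z$ above. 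So you were right to be uneasy: neither your proposal nor the written proof goes through without an extra hypothesis controlling the tail of $z_S$ beyond the top $O(k)$ coordinates (e.g.\ a bound on $\|(z_S)_{-O(k)}\|_1$ in terms of $\|z_{\bar S}\|_1$), and your final ``clean fix'' conflates an $\ell_\infty$ bound on the leftover coordinates with an $\ell_1$ bound.
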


\begin{proof}

We note that in what follows, $T$ is the set obtained in line 14, in procedure \textsc{ReduceNoise}.

Define two sets
\[ S_1 = \left\{ i \in S: |z_i| \geq \frac{1}{3k}\|z_{S}\|_1\right\}, \]
and 
\[	S_2 = \left\{i \in S: |z_i| \geq \frac{1}{4k} \|z_{S}\|_1\right\}.	\]


We also have the assumption that the set $S$ contains all the $1/k$-heavy hitters of $z$, and hence $S_1 \subseteq S$. Furthermore, the estimates $\{\hat{z}_i\}_{i \in S}$ satisfy
\[ 
|\hat{z}_i - z_i | \leq \frac{1}{\beta k} \|z\|_1	
\]
by Lemma \ref{lem:explicit}. Moreover, $|S_2| \leq 4k$. This immediatelly yields $S_1 \subseteq T$, because no $i \in S_1$ can be displaced by some $i' \notin S_2$, i.e.
	\[	\forall i \in S_1,i' \in S\setminus S_2: |\hat{z}_i| > |\hat{z}_{i'}|.	\]
Thus, due to line 7 of Algorithm \ref{alg:reed_solomon_point_query} we have that 
\begin{align*}
	\|z_{S}'\|_1 &= \|z_{S_1}'\|_1 + \|z_{S \setminus S_1}'\|_1 \\
&\leq \|z_{S_1}'\|_1 + \left(\|z_{S \setminus S_1} \|_1 + k \cdot \frac{1}{\beta k} \|z\|_1 \right)\\
&\leq k \cdot \frac{1}{\beta k} \|z\|_1 + k\frac{1}{3k} \|z_{S}\|_1 + \frac{1}{\beta}\|z\|_1\\
&=\frac{2}{\beta}\|z\|_1 + \frac{1}{3} \|z_{S}\|_1 \\
&= \left(\frac{2}{\beta} + \frac{1}{3}\right)\|z_{S}\|_1 + \frac{2}{\beta}\|z_{\bar{S}}\|_1,
\end{align*}
where the first inequality follows from the fact that $S \subseteq T$ and the correctness of the estimates, and the second inequality follows from the correctness of the estimates and the fact that $|z_i|\leq 1/(3k)\|z_{S}\|_1$ for every $i \in S\setminus S_1$.	

Overall we have that
\[	
\|z'\|_1 = \|z_{S}'\|_1 + \|z_{S'}\|_1 \leq (\frac{2}{\beta} + \frac{1}{3})\|z_{S}\|_1 + (1+\frac{2}{\beta})\|z_{\bar{S}}\|_1 \leq \gamma \|z\|_1,	
\]
for some absolute constant $\gamma < 1$, since $\|z_{S}\|_1 \geq 3 \|z_{\bar{S}}\|_1$.
\end{proof}

We are now ready to proceed with the proof of Theorem~\ref{thm:tail_point_query}.

\begin{proof}[Proof of Theorem~\ref{thm:tail_point_query}]
We analyse the $\Oh(\log n)$ iterations of Algorithm \ref{alg:reed_solomon_point_query}, lines 4-8. Observe that only coordinates in $S$ are modified, and coordinates in $\bar{S}$ remain the same. In the beginning,  $x^{(0)} = x$ and $ H(x^{(0)},k,1) \subseteq S$ by assumption. We apply Lemma \ref{lem:one_invocation} for vectors 
\begin{align*}
r^{(0)} &= x - w^{(0)} = x - \hat{x}^{(0)}\\
 r^{(1)} &= r^{(1)} - w^{(1)}  = x - \hat{x}^{(0)} - \hat{x}^{(1)}\\
&\ldots \\
 r^{(t)} &= r^{(t)} - w^{(t)} = x - \sum_{j=1}^{t-1} \hat{x}^{(j)},
\end{align*}
till
\[	
	\|r^{(t)}_{S}\|_1 < 3 \|(r^{(t)})_{\bar{S}}\|_1 = \|x_{\bar{S}}\|_1.	
\]

Moreover, since $H(x,k,1) \subseteq S$, it holds that 
\[ 
\|x_{\bar{S}}\|_1 \leq \|x_{-k}\|_1 + k \frac{1}{k} \|x_{-k}\|_1 = 2 \|x_{-k}\|_1. 
\]

By Lemma \ref{lem:one_invocation}, the number of iterations till this happens is $\log_{1/\gamma}n < C \log n = R$ for large enough $C$. For any $t' > t$, we claim that it holds
\[		
	\|r^{(t')_{S}}\|_1 < \left(3 + \frac{20}{\beta}\right)\|x_{\bar{S}}\|_1. 
\]

To see this, observe that $r^{(t')}_{S}$ is increased only when $\|r^{(t')}_{S}\|_1 < 3 \|r^{(t')}_{\bar{S}}\|_1$, and the increment is 

\[ 5k \cdot \frac{1}{\beta k}\|r^{(t')}_{S}\|_1 =\frac{5}{\beta} ( \|r^{(t')}_{S}\|_1 + \|(r^{(t')})_{\bar{S}}\|_1) \leq \frac{5}{\beta}\cdot 4 \|x_{\bar{S}}\|_1	\] since at most $5k$ coordinates in $S$ are updated. Thus, we have that

\begin{align*}
\|x - w^{(R+1)}\|_1 &= \left\| x - \sum_{t=0}^{R} \hat{x}^{(t)}\right\|_1\\
& = \left\|\left(x - \sum_{t=0}^{R-1} \hat{x}^{(t)}\right) - \hat{x}^{(R)}\right\|_1 \\
&\leq\frac{1}{\beta k } \|r^{(R)} - \hat{x}^{(R)}\|_1 \\
&= \frac{1}{\beta k} \|r^{(R)}\|_1 \leq \frac{1}{\beta k } ( 3+ \frac{4}{\beta}) \|x_{-k}\|_1 \\
&\leq \frac{1}{k}\|x_{-k}\|_1.\qedhere
\end{align*}
\end{proof}

\subsection{Sublinear-time Decoding}\label{sec:sublinear_explicit}

We give our results in the strict turnstile model. Again, we set $ k = \lceil 1/\epsilon \rceil$. We prove the following theorem, which converts $\ell_{\infty}/\ell_1$ sketches that run in $\Omega(n)$ time to $\ell_{\infty}/\ell_1$ sketches that run in $o(n)$ time. 

\begin{theorem} \label{thm:conversion}
Let $M_{n,k}$ be a family of matrices parametrized by $n,k$, such that given $y= M_{n,k}x$ for $x \in \mathbb{R}^n$, and a set $S$ such that $H(x,k,1) \subseteq S$, it is possible to find an $O(k)$-sparse vector $x'$ such that
	\[	\|x - x'\|_{\infty} \leq \frac{1}{k} \|x_{-k}\|_1,	\]
in time $T(n,k,|S|)$. 
Then there exists a matrix $M$ of 
\[	
	\sum_{i=1}^{\log \log_k n} 2^i \cdot M(n^{1/2^i},k),
\]
rows (where $M(n,k)$ denotes the number of rows of $M_{n,k}$) and an algorithm which, given $y=Mx$, finds in time $\sum_{i=1}^{\log \log_k n} 2^i \cdot T(n^{1/2^i},k,25k^2)$ an $O(k)$ sparse vector $x'$ such that 
\[	
	\|x' - x\|_{\infty} \leq \frac{1}{k} \|x_{-k}\|_1.
\]
\end{theorem}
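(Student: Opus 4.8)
The plan is to build $M$ and its decoder by a recursion of depth $L=\Theta(\log\log_k n)$ that repeatedly halves the exponent of the universe size. Put $n_\ell=n^{1/2^\ell}$, so $n_0=n$, $n_{\ell-1}=n_\ell^{\,2}$, and $n_L=\Theta(k)$ (say $n_L\le 5k$). Identify $[n_\ell]$ with $[n_{\ell+1}]\times[n_{\ell+1}]$ via the top and bottom halves of the base-$n_{\ell+1}$ representation of an index, and let $\rho^{\mathrm r}_\ell,\rho^{\mathrm c}_\ell:[n_\ell]\to[n_{\ell+1}]$ be the two resulting projections. For a string $a\in\{\mathrm r,\mathrm c\}^\ell$ let $P_a\colon\R^n\to\R^{n_\ell}$ be the composition of the corresponding projections; $P_a$ is a $0/1$ matrix with exactly one $1$ per column, so $(P_ax)_b=\sum_{i:\,P_a(i)=b}x_i$ is a partial sum of entries of $x$, and since we are in the strict turnstile model $P_ax\ge 0$. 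The sketch $M$ is the vertical concatenation, over $0\le\ell<L$ and $a\in\{\mathrm r,\mathrm c\}^\ell$, of the blocks $M_{n_\ell,k}P_a$. There are $2^\ell$ strings of length $\ell$, so $M$ has $\Theta\!\big(\sum_i 2^iM(n^{1/2^i},k)\big)$ rows, and $Mx$ reveals $M_{n_\ell,k}(P_ax)$ for every $a$, i.e.\ a valid $M_{n_\ell,k}$-sketch of each partial-sum vector $P_ax$. If $M_{n,k}$ is strongly explicit then so is $M$, since an entry of $M_{n_\ell,k}P_a$ is an entry of $M_{n_\ell,k}$ (one $1$ per column of $P_a$) and the column index $P_a(i)$ is extracted from the digits of $i$ in $\poly(\log n)$ time.

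The engine of the recursion is a projection lemma for non-negative vectors. For $w\in\R^m_{\ge 0}$ with ``row-sum'' and ``column-sum'' projections $u,v$ under a partition of $[m]$ into at most $m$ blocks, we have $\|u\|_1=\|v\|_1=\|w\|_1$, and since the $j$ largest coordinates of $w$ lie in at most $j$ blocks of either partition, $\mathrm{top}_j(u)\ge\mathrm{top}_j(w)$, hence $\|u_{-j}\|_1\le\|w_{-j}\|_1$ (and likewise for $v$). Thus if $w_i\ge\frac1k\|w_{-k}\|_1$ and $i$ sits in row-block $a$ and column-block $b$, then $u_a\ge w_i\ge\frac1k\|w_{-k}\|_1\ge\frac1k\|u_{-k}\|_1$ and similarly $v_b\ge\frac1k\|v_{-k}\|_1$, so
\[
H(w,k,1)\subseteq H(u,k,1)\times H(v,k,1).
\]
This is precisely where non-negativity — i.e.\ the strict turnstile assumption — is used.

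The decoder is the recursion $\textsc{Recover}(\ell,a)$, returning a set $S_a\subseteq[n_\ell]$ with $|S_a|=\Oh(k)$ and $H(P_ax,k,1)\subseteq S_a$. If $\ell=L$ it returns $[n_L]$, of size $\Oh(k)$. Otherwise it computes $S_{a\mathrm r}=\textsc{Recover}(\ell+1,a\mathrm r)$ and $S_{a\mathrm c}=\textsc{Recover}(\ell+1,a\mathrm c)$; applying the projection lemma to $w=P_ax$ (whose row- and column-sums are $P_{a\mathrm r}x$ and $P_{a\mathrm c}x$) together with the inductive guarantee on the two children gives $H(P_ax,k,1)\subseteq S:=S_{a\mathrm r}\times S_{a\mathrm c}$, a set of size $\Oh(k^2)$ — at most $25k^2$ when the black box, as in \Cref{thm:tail_point_query}, retains at most $5k$ coordinates. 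Feeding the sketch $M_{n_\ell,k}(P_ax)$ (read off from $Mx$) and this $S$ to the black box produces $x'$ with $\|P_ax-x'\|_\infty\le\frac1k\|(P_ax)_{-k}\|_1$ and $|\supp(x')|=\Oh(k)$, and we return $S_a:=\supp(x')$; this contains $H(P_ax,k,1)$ because a coordinate $i$ with $(P_ax)_i\ge\frac1k\|(P_ax)_{-k}\|_1$ yet $x'_i=0$ would force $|(P_ax)_i-x'_i|\ge\frac1k\|(P_ax)_{-k}\|_1$, violating the guarantee (the borderline equality case being removed by an infinitesimal perturbation of the threshold, or — for the decoder of \Cref{thm:tail_point_query} — because it retains the $\Oh(k)$ largest estimated coordinates and cannot displace such an $i$). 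The output is the vector $x'$ produced at the top call $\textsc{Recover}(0,\text{empty string})$; by the $\ell=0$ instance of the above it is $\Oh(k)$-sparse with $\|x-x'\|_\infty\le\frac1k\|x_{-k}\|_1$. The recursion tree has $2^\ell$ nodes at level $\ell$, each doing one black-box call with $|S|\le 25k^2$ plus $\Oh(k^2)$ set-bookkeeping, so the running time is $\sum_\ell 2^\ell T(n_\ell,k,25k^2)+\Oh(2^Lk^2)=\Theta\!\big(\sum_i 2^iT(n^{1/2^i},k,25k^2)\big)$.

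The step I expect to be the main obstacle is the double constraint on the sets returned by $\textsc{Recover}$: they must remain of size $\Oh(k)$ — so that the products $S_{a\mathrm r}\times S_{a\mathrm c}$ stay $\Oh(k^2)$ and neither the candidate lists nor the black-box runtimes blow up over the $\Theta(\log\log_k n)$ levels — yet must still contain every heavy hitter of the current partial-sum vector, so that the hypothesis required by the black box survives each row/column folding. The projection lemma is exactly what reconciles these two demands, and it is crucial that it holds with the \emph{same} parameter $k$ at every level, which in turn rests on the non-negativity of $x$ in the strict turnstile model. The secondary point — that the $\Oh(k)$-sparse output of the black box actually captures all strict heavy hitters — is routine and is dealt with as above; one should also verify the harmless rounding when $n^{1/2^i}$ is non-integral (absorbed by taking the blocks of each partition of nearly equal size) and that $n_L=\Theta(k)$ is reached after $\Theta(\log\log_k n)$ halvings of the exponent.
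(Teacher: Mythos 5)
Your proposal is correct and follows essentially the same route as the paper: split indices into their high and low halves, form the row-sum and column-sum vectors (non-negative by the strict turnstile assumption), observe that projection preserves $\ell_1$ mass and can only decrease tail norms so that every $1/k$-heavy hitter of the original vector projects to $1/k$-heavy hitters of both projections, recurse on each half to obtain candidate sets of size $\Oh(k)$, take their Cartesian product (size $\le 25k^2$), and feed it to the black-box decoder at the current level; the recursion tree is binary of depth $\Theta(\log\log_k n)$ with $2^\ell$ nodes at depth $\ell$ on universes of size $n^{1/2^\ell}$. The only cosmetic difference is that you push the base case one level deeper (stopping when $n_L\le 5k$ and returning the whole index set without a query) whereas the paper stops at $n_L\le 25k^2$ and queries the leaf vector directly with $S$ equal to its full index set; both yield $|S|\le 25k^2$ at every black-box call and the same asymptotics, and your explicit isolation of the ``projection lemma'' and the $\textsc{Recover}(\ell,a)$ recursion is a cleaner write-up of what the paper does inline.
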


\begin{proof}
For $ i \in [n]$ we define $\operatorname{first}(i)$ to be the number represented by the first $n_1 := \lceil \frac{1}{2} \log n \rceil$ bits of $i$, and $\operatorname{sec}(i)$ be the number represented by the last $n_2 := \log n - \lceil \frac{1}{2} \log n \rceil$ bits of $i$. This induces an injective map $\pi: [n]\to [N_1]\times [N_2]$, where $N_1 = 2^{n_1}$ and $N_2 = 2^{n_2}$.

For every $x \in \mathbb{R}^n$ we form vectors $v \in \R^{N_1}$ and $u \in \R^{N_2}$ as
\[
v_j = \sum_{ i \in [n]: \operatorname{first}(i) = j} x_i
\] 
and 
\[
	u_j = \sum_{i \in [n]: \operatorname{sec}(i) = j} x_i.
\]
By the assumption of matrix family $M_{n,k}$, we can find vectors $v'\in\R^{N_1}$ and $u'\in \R^{N_2}$ of support size $5k$ such that $\supp(v')\supseteq H(v,k,1)$ and $\supp(u')\supseteq H(u,k,1)$.

Observe that
\[
\|u\| = \|v\|_1 = \|x\|_1, \quad \|u_{-k}\|_1,\|v_{-k}\|_1 \leq \|x_{-k}\|_1.	
\]
by the assumption of the strict turnstile model. Moreover, we have that for all $j \in [N_1]$ such that that there exists $i \in H(x,k,1) \cap \operatorname{first}^{-1}(j)$, it holds that $ |v_j| \geq \frac{1}{k}\|x_{-k}\|_1 \geq \frac{1}{k}\|v_{-k}\|_1$, which implies if we recover the heavy hitters in $v$, we can recover the first $n_1$ bits of heavy hitter indices in $x$. Similarly, if we recover the heavy hitters in $u$, we can recover the last $n_2$ bits of heavy hitter indices in $x$. Formally we have that $H(x,k,1)\subseteq \pi^{-1}(H(v,k,1)\times H(u,k,1))$.

Let $S_1 = \supp(v')$ and $S_2 = \supp(u')$. We have shown that $H(x,k,1)\subseteq \pi^{-1}(S_1\times S_2)$. For every $(i,j) \in S_1 \times S_2$, we query $x_{\pi^{-1}(i,j)}$ using the matrix $M_{n,k}$, and find a vector $x'$ of $5k$ nonzero coordinates such that $\|x'-x\|_1\leq (1/k)\|x_{-k}\|_1$, which we finally output. 

The running time is
\[ 
	T(n,k,k) = 2 T(\sqrt{n},k,\sqrt{n}) + T(n,k,25k^2), 
\]
and number of rows is
\[ 
	2 M(\sqrt{n},k) + M(n,k).
\]

We apply the same idea recursively to $u$ and $v$, each of length $\sqrt{n}$. We stop at the $i$th level of recursion, when $n^{1/2^i} \leq 25k^2$. One can imagine the recursion tree as a binary tree of $\log_k n$ nodes, the $i$th level of which uses $2^iM(n^{1/2^i},k)$ rows.
\end{proof}

We are now ready to prove Theorem~\ref{thm:strict_turnstile}. We restate it with $k = \lceil 1/\epsilon\rceil$ as follows.
\begin{reptheorem}{thm:strict_turnstile}[rephrased]
There exists a strongly explicit matrix $M$ with $O(k^2 (\log_k n)^3)$ rows, which, given $y = Mx$, allows us to find an $O(k)$-sparse vector $\hat x$ such that
	\[
		\|x-\hat x\|_{\infty} \leq \frac{1}{k} \|x_{-k}\|_1,
	\]
in time $O(k^3 (\log_k n)^3)$. 
\end{reptheorem}

\begin{proof}
The proof follows by invoking Theorem~\ref{thm:conversion}, where the good family of matrices is guaranteed by Theorem~\ref{thm:tail_point_query}.  Observe that the number of rows is upper bounded by
\[
	\Oh\left(\sum_{i=1}^{\log \log_k n} 2^i \cdot k^2 \left(\frac{ \log (n/2^i)}{ \log \log n + \log k} \right)^2\right) =
	\Oh\left(k^2\sum_{i=1}^{\log \log_k n} 2^i\left(\frac{\log n}{\log k}\right)^2\right) =
	\Oh\left(k^2 (\log_k n)^3\right)
\]
and the total running time is 
\[
	\Oh\left(2^{\log \log_k n}\cdot k^3 \left(\frac{\log k}{ \log \log k + \log k}\right) + \!\!\!\!\sum_{i=1}^{\log \log_k n -1}\!\!\!\! 2^i \cdot k^3 \left(\frac{\log (n/2^i)}{\log \log (n/2^i) + \log k}\right)^2\right) = \Oh\left(	k^3 (\log_k n)^3 \right).
\]
\end{proof}

\section{Conclusion and Open Problems}

In this work, we present the first algorithm for finding $\ell_1$ heavy hitters ($\ell_{\infty}/\ell_1$ guarantee) deterministically in sublinear time, up to an $\Oh(\log^{\ast}(1/\epsilon))$ factor in the number of measurement from the best superlinear-time algorithm. It still remains to improve the dependence on $\epsilon$ in the running time, ideally to $\Oh(\epsilon^{-2} \poly(\log n))$. The problem could first be approached in the strict turnstile model, where it is possible to avoid the heavy machinery of list-recoverable codes or the clustering algorithm of \cite{LNNT}. Another open problem is to find (fully) explicit constructions that allows a sublinear-time decoding with the number of rows near $\Oh(\epsilon^{-2}\log n)$ in the strict turnstile model. In the general turnstile model, our current understanding and techniques suggest that  an explicit scheme would require an explicit construction of expanders or lossless condensers, together with list-recoverable codes with nearly optimal encoding and decoding time, 
constructions that are currently out of reach. In conclusion, we hope that our work will ignite further work in the field, and towards the resolution of some of these questions.

\bibliographystyle{plain}
\bibliography{biblio} 

\appendix
\section{Encoding One Bit In Two Measurements}\label{sec:example1}

Suppose that $n = 8$ and there are two buckets $\{1,3,4,6\}$ and $\{2,5,7,8\}$. This corresponds to the following $2\times 8$ measurement matrix:
\[
\begin{pmatrix}
1 & 0 & 1 & 1 & 0 & 1 & 0 & 0\\
0 & 1 & 0 & 0 & 1 & 0 & 1 & 1
\end{pmatrix}.
\]
Suppose that for each position we wish to embed a message of two bits, listed as column vectors below.
\[
\begin{pmatrix}
\mathbf{m}_1 & \mathbf{m}_2 & \cdots & \mathbf{m}_8
\end{pmatrix}
=
\begin{pmatrix}
1 & 0 & 0 & 0 & 1 & 1 & 1 & 1\\
1 & 0 & 1 & 1 & 0 & 0 & 1 & 0
\end{pmatrix}.
\]
The measurement matrix will therefore be $8\times 8$ as below.
\[
\Phi = 
\left(\begin{array}{cccccccc}
0 & 0 & 1 & 1 & 0 & 0 & 0 & 0\\
1 & 0 & 0 & 0 & 0 & 1 & 0 & 0\\ 
0 & 0 & 0 & 0 & 0 & 1 & 0 & 0\\
1 & 0 & 1 & 1 & 0 & 0 & 0 & 0\\
\hdashline[2pt/2pt]
0 & 1 & 0 & 0 & 0 & 0 & 0 & 0\\
0 & 0 & 0 & 0 & 1 & 0 & 1 & 1\\
0 & 1 & 0 & 0 & 1 & 0 & 0 & 1\\
0 & 0 & 0 & 0 & 0 & 0 & 1 & 0
\end{array}
\right)
\]
The first $4$ rows correspond to the first bucket and the last $4$ rows the second bucket. For instance, $(\begin{smallmatrix}\Phi_{4,3} \\ \Phi_{4,4}\end{smallmatrix}) = (\begin{smallmatrix} 0\\ 1\end{smallmatrix})$ because the second bit of $\mathbf{m}_4$ is $1$, and we rewrite a single entry $a$ as $(\begin{smallmatrix} 0\\ a\end{smallmatrix})$ for bit $1$; $(\begin{smallmatrix}\Phi_{7,5} \\ \Phi_{7,6}\end{smallmatrix}) = (\begin{smallmatrix} 0\\ 1\end{smallmatrix})$ because the first bit of $\mathbf{m}_7$ is $1$.

To decode, suppose that the signal
\[
x = \begin{pmatrix} 10.1 & -0.1 & 0.3 & 0.2 & -9.7 & 0.1 & 0.2 & -0.2 \end{pmatrix}^T
\]
has two heavy hitters $x_1$ and $x_5$, which are isolated in the respective bucket. Then the measurements are
\[
y = \Phi x = \begin{pmatrix} 0.5 & 10.2 & 0.1 & 10.6 & -0.1 & -9.7 & -10 & 0.2\end{pmatrix}^T
\]
Converting two consecutive measurements back to a bit in succession according to their magnitude, we can thus recover
\[
\mathbf{m} = \begin{pmatrix} 1 & 1 & 1 & 0 \end{pmatrix}.
\]
Here the first bit is $1$ because $|y_1|<|y_2|$, the last bit is $0$ because $|y_7|>|y_8|$. In the recovered message $\mathbf{m}$, the first two bits $11$ is the message recovered from the first bucket and the last two bits $10$ the message from the second bucket. We see that they are exactly the messages ($\mathbf{m}_1$ and $\mathbf{m}_5$) associated with the heavy hitter ($x_1$ and $x_5$) in the respective bucket.
\section{Proof of Theorem \ref{thm:weakerell_infty/ell_1}}\label{sec:weaker_ellinfty/ell1_proof}

Without loss of generality, assume that $\|x_{-k}\|_1 = 1$. We shall apply \Cref{lem:weak} repeatedly to obtain a sequence of vectors $\hat{x}^{(i,r)}$, which admit decompositions $x^{(i,r)} = x - \hat{y}^{(i,r)} - \hat{z}^{(i,r)}$. Consider the following loop invariants, parametrized by $(i,r)$, at the beginning of the $i$-th step in the $r$-th round:

\begin{equation}\label{eqn:loop_1}
\begin{aligned}
|\supp(\hat{y}^{(i,r)})| &\leq s_{i,r} := (i+1)^2 k_r^{2^{-i}}\\
\|\hat{z}^{(i,r)}\|_1 &\leq 2- s_{i,r}/k
\end{aligned}
\end{equation}

We claim that the loop invariants above hold for $(0,r)$ for $r = O(\log^\ast k)$. The base case is $(0,0)$ and the loop invariants holds trivially. Suppose that the loop invariants hold for $(i,r)$, we shall show that it holds for $(i,r+1)$ whenever $r \leq r_0$ for some $r_0 = O(\log^\ast k)$.

To prove the inductive step w.r.t.\ $r$, we consider an inductive proof w.r.t.\ $i$ for a fixed $r$. For the inductive step, if $i < i_r^\ast$ we apply \Cref{lem:weak} with $w_{i,r} = (i+1)^2$ and $s_{i,r} = (i+1)^2 k_r^{2^{-i}}$. We then get that 
\[
|\supp(\hat{y}^{(i,r)})| \leq\sqrt{s_{i,r} w_{i,r}} = (i+1)^2 k_r^{2^{-(i+1)}}\leq s_{i+1,r},
\]
and
\[
 \|\hat{z}^{(i,r)}\|_1 \leq 2-s_{i,r}/(2k)\leq 2-s_{i+1,r}/k
\] 
when $s_{i+1,r}\leq s_{i,r}/2$, that is, when $4(1+\frac{1}{i+1})^4\leq k_r^{2^{-i}}$. 

This proves the loop invariants \eqref{eqn:loop_1} when $k_r^{2^{-i}}\geq \max \{(i+1)^2, 4(1+\frac{1}{i+1})^4 \}$, and that is $i \leq i_r^\ast$ for some $i_r^\ast = \Oh(\log \log k_r)$. At this stage, the residual admits the decomposition $y^{(i_r^\ast,r)} +z^{(i_r^\ast,r)}$ with $|\supp(y^{(i_r^\ast,r)})| \leq \Oh(\log^4 \log k)$ and $\|z^{(i,r)}\|_1\leq 2 - s_{{i_r^\ast},r}/k$. 
%

Now we change our choice of parameters and the loop invariants. In the $i$-th step ($i\geq i_r^\ast + 1$), we claim the following invariants hold at the beginning of the $i$-th step by changing $w_{i,r}$ to $w_{i,r}=1$:
\begin{equation}\label{eqn:loop_2}
\begin{aligned}
|\supp(\hat{y}^{(i,r)})| &\leq s_{i,r} := (s_{i_r^\ast,r})^{2^{-(i-i_r^\ast-1)}}\\
\|\hat{z}^{(i,r)}\|_1 &\leq 2 - s_{i,r}/k
\end{aligned}
\end{equation}
By our choice of $i_r^\ast$ and the argument above the invariants hold when $i = i_r^\ast + 1$. Applying \Cref{lem:weak} with $w_{i,r} = 1$, we see that
\[
|\supp(\hat{y}^{(i,r)})| \leq\sqrt{s_{i,r}} \leq s_{i+1,r}
\]
and
\[
 \|\hat{z}^{(i,r)}\|_1 \leq 2-s_{i,r}/(2k_r)\leq 2-s_{i+1,r}/k,
\]
whenever $s_{i,r}\geq 4$. This proves the loop invariants \eqref{eqn:loop_2} when $s_{i,r}\geq \max\{4, \log\log k_r\}$, which holds when $i\leq i_r^\ast + i_r^+$ for some $i_r^+ = O(1)$ (recall that $s_{i_r^\ast} = O(\log^4\log k_r))$. These steps increases $\supp(\hat x)$ by
leaves us a decomposition of the residual as $y^{(i,r)} + z^{(i,r)}$, where $|\supp(y^{(i,r)})| = s_{i,r} \leq \max\{4, \log \log k_r\}$ and $\|z^{(i,r)}\|_1 \leq 2 - s_{i,r}/k$.

Next, we start a new round by setting $k_{r+1} = s_{0,r+1} = s_{{i_r^\ast+{i_r^+}+1},r}$. The loop invariants in \eqref{eqn:loop_1} continue to hold in the base case $i = 0$. This completes proof of the claim that the loop invariants hold for $(0,r+1)$, provided that $k_{r+1} > 4$. Since $k_{r+1} \leq \log \log k_r$ and $k_0 = k$, the loop invariants in \eqref{eqn:loop_1} hold for all $r\leq r_0$ for some $r_0 = O(\log^\ast k)$.

When $k_{r+1} \leq 4$, that is, there are at most $4$ heavy hitter left, we shall recover all of them in one call to the weak system. Setting $w < 1/4$ in \Cref{lem:weak} yields that $|\supp(\hat y)| < 1$; it thus must hold that $|\supp(\hat y)| = 0$, or $\hat y = 0$, which means that all heavy hitters have recovered. This last call recovers $\hat x_{\text{final}}$, which has support size $O(1)$.

\smallskip
\noindent\textbf{Support size of output.} The support size of the output $\hat x$ is upper bounded by
\[
|\supp(\hat x_{\text{final}})| + \sum_{r,i} |\supp(\hat x^{(i,r)})| \leq \Oh(1) + \sum_r \Oh(k_r) = \Oh(k).
\]

\noindent\textbf{Number of rows.} In all rounds except the last round, the number of rows is bounded by
\[
\Oh\left(\sum_{i=0}^{i_r^\ast} \frac{k^2}{(i+1)^2}\log n\right) + \Oh(i_r^+ \cdot k^2\log n) = \Oh(k^2\log n)
\]
and the last round needs $\Oh(k^2\log n)$ rows. The overall number of rows is therefore $m = \Oh(k^2\log n\log^\ast k)$ as there are $\Oh(\log^\ast k)$ rounds.

\smallskip
\noindent\textbf{Runtime.} Each call to the weak system runs in $\Oh(k^6\poly(\log n))$ time and there are $(\sum_r (i_r^\ast+i_r^+))+1 = \Oh(\log \log k\cdot \log^\ast k)$ calls. Each update of $y^{(i+1,r)}\gets y^{(i,r)}-\Phi \hat x^{(i,r)}$ takes $\Oh(mk)$ since $\Phi$ has $m$ rows and $|\supp(\hat x)| = \Oh(k)$; there are $O(\log \log k\cdot \log^\ast k)$ such updates. The overall runtime is therefore $\Oh(k^6\poly(\log n))$.

\smallskip
\noindent\textbf{Storage of the sketching matrix.} Each weak system uses $\Oh(k\log n)$ random $\Oh(k)$-wise independent hash function and needs space $\Oh(k^2\log n)$ words. We have $\Oh(\log \log k\cdot \log^\ast k)$ such hash functions and thus the total storage for sketching matrix is $\Oh(k^2\log n \cdot \log \log k \log^{\ast} k)$ words.

\end{document}